\newcommand{\DD}{{\mathcal D}}
\newcommand{\m}{{\mathcal M}}
\newcommand{\CP}{{\mathcal P}}
\newcommand{\R} {{\mathbb R}}
\newtheorem{definition}{Definition}[section]
\newtheorem{prop}{Proposition}[section]
\newtheorem{thm}{Theorem}[section]
\newtheorem{coro}{Corollary}[thm]
\newtheorem{assumption}{Assumption}
\title{Gaussian Differential Privacy on Riemannian Manifolds}
\author{%
Yangdi Jiang, Xiaotian Chang, Yi Liu, Lei Ding,  Linglong Kong, Bei Jiang*
 \\
Department of Mathematical and Statistical Sciences\\ University of Alberta\\
\texttt{\{yangdi, xchang4, yliu16, lding1, lkong, bei1\}@ualberta.ca} \\
}
\begin{document}

\maketitle

\begin{abstract}

    We develop an advanced approach for extending Gaussian Differential Privacy (GDP) to general Riemannian manifolds. 
    The concept of GDP stands out as a prominent privacy definition that strongly warrants extension to manifold settings, due to its central limit properties. By harnessing the power of the renowned Bishop-Gromov theorem in geometric analysis, we propose a Riemannian Gaussian distribution that integrates the Riemannian distance, allowing us to achieve GDP in Riemannian manifolds with bounded Ricci curvature. To the best of our knowledge, this work marks the first instance of extending the GDP framework to accommodate general Riemannian manifolds, encompassing curved spaces, and circumventing the reliance on tangent space summaries. We provide a simple algorithm to evaluate the privacy budget $\mu$ on any one-dimensional manifold and introduce a versatile Markov Chain Monte Carlo (MCMC)-based algorithm to calculate $\mu$ on any Riemannian manifold with constant curvature. Through simulations on one of the most prevalent manifolds in statistics, the unit sphere $S^d$, we demonstrate the superior utility of our Riemannian Gaussian mechanism in comparison to the previously proposed Riemannian Laplace mechanism for implementing GDP.

\end{abstract}

\section{Introduction}
\label{sec_intro}

As technological advancements continue to accelerate, we are faced with the challenge of managing and understanding increasingly complex data. This data often resides in nonlinear manifolds, commonly found in various domains such as medical imaging \citep{pennec2019riemannian, dryden2005, dryden2009}, signal processing \citep{barachant2010, zanini2018}, computer vision \citep{turaga2015, Turaga2008grassmann, guang2013tracking}, and geometric deep learning \citep{belkin2006, niyogi2013}. These nonlinear manifolds are characterized by their distinct geometric properties, which can be utilized to extract valuable insights from the data.  

As data complexity grows, so does the imperative to safeguard data privacy. Differential Privacy (DP) \citep{dwork2006b} has gained recognition as a prominent mathematical framework for quantifying privacy protection, and a number of privacy mechanisms \citep{mcsherry2007exponential, barak2007, wasserman2010framework, reimherr2019} have been devised with the aim of achieving DP. Conventional privacy mechanisms, while effective for dealing with linear data, encounter difficulties when handling complex non-linear data. In such cases, a common approach, referred to as the extrinsic approach, is to embed the non-linear data into the ambient Euclidean space, followed by the application of standard DP mechanisms. However, as exemplified in the work of \citet{reimherr2021}, the intrinsic properties of such non-linear data enable us to achieve better data utility while simultaneously preserving data privacy. Therefore, it is imperative that privacy mechanisms adapt to the complexity of non-linear data by employing tools from differential geometry to leverage the geometric structure within the data. 

\paragraph{Related Work} \citet{reimherr2021} is the first to consider the general manifolds in the DP literature. It extends the Laplace mechanism for $\varepsilon$-DP from Euclidean spaces to Riemannian manifolds. Focusing on the task of privatizing Frechet mean, it demonstrates that better utility can be achieved when utilizing the underlying geometric structure within the data. Continuing its work, \citet{reimherr2022} develops a K-norm gradient mechanism for $\varepsilon$-DP on Riemannian manifolds and shows that it outperforms the Laplace mechanism previously mentioned in the task of privatizing Frechet mean. Similarly, \citet{Saiteja2023} extends $(\varepsilon, \delta)$-DP and its Gaussian mechanism but only to one specific manifold, the space of symmetric positive definite matrices (SPDM). Equipping the space of SPDM with the log Euclidean metric, it becomes a geometrically flat space \citep{arsigny2007}. This allows them to simplify their approach and work with fewer complications, although at the expense of generality. 
In contrast to the task of releasing manifold-valued private summary, \citet{Han2022, utpala2023improved} focus on solving empirical risk minimization problems in a $(\varepsilon, \delta)$-DP compliant manner by privatizing the gradient which resides on the tangent bundle of Riemannian manifolds. Working on tangent spaces instead of the manifold itself, they could bypass many of the difficulties associated with working under Riemannian manifolds.


\paragraph{Motivations} Although the $\varepsilon$-differential privacy (DP) and its Laplace mechanism have been extended to general Riemannian manifolds in \citet{reimherr2021}, there are other variants of DP \citep{dwork2006a, mironov2017renyi, Bun2016, dong2019gaussian}. Each of them possesses unique advantages over the pure DP definition, and therefore their extensions should be considered as well. As one such variant, GDP offers superior composition and subsampling properties to that of $\varepsilon$-DP. Additionally, it's shown that all hypothesis testing-based privacy definitions converge to the guarantees of GDP in the limit of composition \citep{dong2019gaussian}. Furthermore, when the dimension of the privatized data approaches infinity, a large class of noise addition private mechanisms is shown to be asymptotically GDP \citep{dong2021central}. These traits establish GDP as the focal privacy definition among different variants of DP definitions and therefore make it the most suitable option for generalizing to Riemannian manifolds.

\paragraph{Main Contributions} With the goal of releasing manifold-valued statistical summary in a GDP-compliant manner, we extend the GDP framework to general Riemannian manifolds, establishing the ability to use Riemannian Gaussian distribution for achieving GDP on Riemannian manifolds. We then develop an analytical form to achieve $\mu$-GDP that covers all the one-dimensional cases. Furthermore, we propose a general MCMC-based algorithm to evaluate the privacy budget $\mu$ on Riemannian manifolds with constant curvature. Lastly, we conduct numerical experiments to evaluate the utility of our Riemannian Gaussian mechanism by comparing it to the Riemannian Laplace mechanism. Our results conclusively demonstrate that to achieve GDP, our Gaussian mechanism exhibits superior utility compared to the Laplace mechanism. 

\section{Notation and Background}
\label{sec_background}
In this section, we first cover some basic concepts from Riemannian geometry. The materials covered can be found in standard Riemannian geometry texts such as \citet{lee2006riemannian, petersen2006riemannian, pennec2019riemannian, said2021statistical}. 
Then we review some definitions and results on DP and GDP, please refer to \citet{dwork2014, dong2019gaussian, dong2021central} for more detail. 

\subsection{Riemannian Geometry}
\label{sec_riemann} 

Throughout this paper we let $\m$ denote a $d$-dimensional complete Riemannian manifold unless stated otherwise. A Riemannian metric $g$ is a collection of scalar products $\langle\cdot, \cdot\rangle_x$ on each tangent space $T_x \mathcal{M}$ at points $x$ of the manifold that varies smoothly from point to point. For each $x$, each such scalar product is a positive definite bilinear map $\langle\cdot, \cdot\rangle_x: T_x \mathcal{M} \times T_x \mathcal{M} \rightarrow \mathbb{R}$.

Equipped with a Riemannian metric $g$, it grants us the ability to define length and distance on $\m$. Consider a curve $\gamma(t)$ on $\m$, the length of the curve is given by the integral 
\[
L(\gamma)=\int\|\dot{\gamma}(t)\|_{\gamma(t)} d t=\int\left(\langle\dot{\gamma}(t), \dot{\gamma}(t)\rangle_{\gamma(t)}\right)^{\frac{1}{2}} \; dt
\]
where the $\dot{\gamma}(t)$ is the velocity vector and norm $\|\dot{\gamma}(t)\|$ is uniquely determined by the Riemannian metric $g$. Note we use $\|\cdot\|$ to denote the $l_2$ norm throughout this paper.

It follows that the distance between two points $x,y \in \m$ is the infimum of the lengths of all piece-wise smooth curves from $x$ to $y$, $d(x,y)=\inf_{\gamma: \gamma(0)=x,\gamma(1)=y} L(\gamma)$. In a similar fashion, we can introduce the notion of measure on $\m$. The Riemannian metric $g$ induces a unique measure $\nu$ on the Borel $\sigma$-algebra of $\m$ such that in any chart $U$, $d\nu = \sqrt{\det g} d\lambda$ where $g=\left(g_{i j}\right)$ is the matrix of the Riemannian metric $\mathrm{g}$ in $U$, and $\lambda$ is the Lebesgue measure in $U$ \citep{grigoryan2009heat}.

Given a point $p \in \m$ and a tangent vector $v \in T_p\m$, there exists a unique geodesic $\gamma_{(p,v)}(t)$ starting from $p=\gamma_{(p, v)}(0)$ with tangent vector $v=\dot{\gamma}_{(p, v)}(0)$ defined in a small neighborhood of zero. It can then be extended to $\R$ since we assume $\m$ is complete. This enables us to define the exponential map $\exp_p: T_p\m \to \m$ as $\exp_p(v) = \gamma_{(p, v)}(1)$. For any $p\in \m$, there is a neighborhood $V$ of the origin in $T_p\m$ and a neighborhood $U$ of $p$ such that $\exp_p |_V: V \to U$ is a diffeomorphism. Such $U$ is called a normal neighborhood of $p$. Locally, the straight line crossing the origin in $T_p\m$ transforms into a geodesic crossing through $p$ on $\m$ via this map. On the normal neighborhood $U$, the inverse of the exponential map can be defined and is denoted by $\log_p$. The injectivity radius at a point $p\in \m$ is then defined as the maximal radius $R$ such that $B_p(R)\subset \m$ is a normal neighborhood of $p$, and the injectivity radius of $\m$ is given by $\operatorname{inj}_{\m}=\operatorname{inf}\{\operatorname{inj}_{\m}(p),\  p \in \m \}$.

\subsection{Differential Privacy}
\label{sec_dp}


We start this section with the definition of $(\varepsilon, \delta)$-DP.

\begin{definition}[\citep{dwork2006a}]
    A data-releasing mechanism $M$ is said to be $(\varepsilon, \delta)$-differentially private with $\varepsilon \ge 0, 0 \le \delta \le 1$, if for any adjacent datasets, denoted as $\DD \simeq \DD^{\prime}$, differing in only one record, we have $\Pr(M(\DD) \in A) \leq e^\varepsilon \Pr\left(M\left(\DD^{\prime}\right) \in A\right) + \delta$ for any measurable set $A$ in the range of $M$.
\end{definition}

Differential privacy can be interpreted from the lens of statistical hypothesis testing \citep{wasserman2010framework, kairouz2017composition}. Given the outcome of a $(\varepsilon, \delta)$-DP mechanism and a pair of neighboring datasets $\DD \simeq \DD'$, consider the hypothesis testing problem with $H_0 \text{: The underlying dataset is} \, \DD$ and $H_1 \text{: The underlying dataset is} \, \DD'$. The smaller the $\varepsilon$ and $\delta$ are, the harder this hypothesis testing will be. That is, it will be harder to detect the presence of one individual based on the outcome of the mechanism. More specifically, $(\varepsilon, \delta)$-DP tells us the power (that is, 1 - type II error) of any test at significance level $\alpha \in [0,1]$ is bounded above by $e^\varepsilon \alpha + \delta$. Using this hypothesis testing interpretation, we can extend $(\varepsilon, \delta)$-DP to the notion of Gaussian differential privacy. 

Let $M(\DD), M(\DD')$ denote the distributions of the outcome under $H_0, H_1$ respectively. Let $T\left(M(\DD), M\left(\DD'\right)\right): [0,1] \to [0,1], \alpha \mapsto T\left(M(\DD), M\left(\DD'\right)\right)(\alpha)$ denote the optimal tradeoff between type I error and type II error. More specifically, $T\left(M(\DD), M\left(\DD'\right)\right)(\alpha)$ is the smallest type II error when type I error equals $\alpha$. 

\begin{definition}[\citep{dong2019gaussian}]\label{def_gdp}
    A mechanism $M$ is said to satisfy $\mu$-Gaussian Differential Privacy $(\mu$-GDP) if $T\left(M(\DD), M\left(\DD^{\prime}\right)\right) \ge G_\mu$ for all neighboring datasets $\DD \simeq \DD'$ with $G_\mu := T(N(0,1), N(\mu, 1))$.
\end{definition}

Informally, $\mu$-GDP states that it's harder to distinguish $\DD$ from $\DD'$ than to distinguish between $N(0,1)$ and $N(\mu, 1)$. Similar to the case of $(\varepsilon, \delta)$-differential privacy, a smaller value of $\mu$ provides stronger privacy guarantees. As a privacy definition, $\mu$-GDP enjoys several unique advantages over the $(\varepsilon, \delta)$-DP definition. Notably, it has a tight composition property that cannot be improved in general. More importantly, a crucial insight in \citet{dong2019gaussian} is that the best way to evaluate the privacy of the composition of many "highly private" mechanisms is through $\mu$-GDP. More specifically, it gives a central limit theorem that states all hypothesis testing-based privacy definitions converge to the guarantees of $\mu$-GDP in the limit of composition. Furthermore, \citet{dong2021central} shows that a large class of noise addition mechanisms is asymptotic $\mu$-GDP when the dimension of the privatized data approaches infinity. These distinct characteristics position $\mu$-GDP as the focal privacy definition among different variants of DP definitions, and we will extend the $\mu$-GDP framework to general Riemannian manifolds in Section \ref{sec_gdp}.

\section{Gaussian Differential Privacy on General Riemannian Manifolds}
\label{sec_gdp}

Our primary objective in this study is to disclose a $\m$-valued statistical summary while preserving privacy in a GDP-compliant manner. To this end, we first extend the GDP definition to general Riemannian manifolds. In Definition \ref{def_gdp}, $\mu$-GDP is defined through the optimal trade-off function $T(M(\DD), M(\DD'))$, which is challenging to work with on Riemannian manifolds.  We successfully resolve this difficulty by expressing $\mu$-GDP as an infinite collection of $(\varepsilon, \delta)$-DP (Corollary 1 in \cite{dong2019gaussian}). Since $(\varepsilon, \delta)$-DP is a well-defined notion on any measurable space \citep{wasserman2010framework}, it readily extends to any Riemannian manifold equipped with the Borel $\sigma$-algebra. Following this methodology, we define $\mu$-GDP on general Riemannian manifolds as follows.

\begin{definition}\label{def_gdp_riemann}
    A $\m$-valued data-releasing mechanism $M$ is said to be $\mu$-GDP if it's $(\varepsilon, \delta_\mu(\varepsilon))$-DP for all $\varepsilon \ge 0$, where
    \[
        \delta_\mu(\varepsilon):=\Phi\left(-\frac{\varepsilon}{\mu}+\frac{\mu}{2}\right)-\mathrm{e}^{\varepsilon} \Phi\left(-\frac{\varepsilon}{\mu}-\frac{\mu}{2}\right).
    \]
    and $\Phi$ denotes the cumulative distribution function of the standard normal distribution.
\end{definition}

Similarly, we extend the notion of sensitivity to Riemannian manifolds as well. 
\begin{definition}[\citet{reimherr2021}]
    A summary $f$ is said to have a \textbf{global sensitivity} of $\Delta<\infty$, with respect to $d(\cdot, \cdot)$, if we have $d\left(f(\DD), f\left(\DD^{\prime}\right)\right) \leq \Delta$ for any two dataset $\DD \simeq \DD^{\prime}$.
\end{definition}




Following the extension of $\mu$-GDP to Riemannian manifolds, it is crucial to develop a private mechanism that is compliant with $\mu$-GDP. Given that the Gaussian distribution satisfies $\mu$-GDP on Euclidean space \citep{dong2019gaussian}, we hypothesize that an analogous extension of the Gaussian distribution into Riemannian manifolds would yield similar adherence to $\mu$-GDP. (e.g., \citep{reimherr2021} for extending Laplace distribution to satisfy $\varepsilon$-DP on Riemannian manifolds). We introduce the Riemannian Gaussian distribution in the following definition.

\begin{definition}[Section 2.5 in \citet{pennec2019riemannian}]\label{def_riemann_gauss}
    Let $(\m, g)$ be a Riemannian manifold such that
    \[
    Z(\eta , \sigma) = \int_{\m} \exp\left\{ - \frac{d\left( y, \eta \right)^2}{2\sigma^2} \right\} d\nu(y) < \infty.
    \]
    We define a probability density function w.r.t $d\nu$ as
    \begin{align}\label{gauss_density}
        p_{\eta, \sigma}(y) &= \frac1{Z(\eta , \sigma)} \exp\left\{ - \frac{d\left( y, \eta \right)^2}{2\sigma^2} \right\}.
    \end{align}
    We call this distribution a \textbf{Riemannian Gaussian distribution} with footprint $\eta$ and rate $\sigma$ and denote it by $Y \sim N_\m (\eta, \sigma^2)$.
\end{definition}

The necessity for $Z(\eta, \sigma)$ to be finite is generally of little concern, as it has been shown to be so in any compact manifolds \citep{chakraborty2019stiefel} or any Hadamard manifolds\footnote{A Hadamard manifold is a Riemannian manifold that is complete and simply connected and has everywhere non-positive sectional curvature.}—with lower-bounded sectional curvature \citep{said2021statistical}. The distribution we introduce has been established to maximize entropy given the first two moments \citep{pennec2019riemannian, pennec2006intrinsic}, and has already found applications in a variety of scenarios \citep{miaomiao2013, Hauberg2018normal, said2017gaussian, guang2013tracking, zanini2018, chakraborty2019stiefel}. When $\m = \R^d$, the Riemannian Gaussian distribution reduces to the multivariate Gaussian distribution with a mean of $\eta$ and a variance of $\sigma^2 \mathbf{I}$.

However, it is critical to highlight that this extension is not the only possible method for integrating the Gaussian distribution into Riemannian manifolds. Other approaches are available, such as the heat kernel diffusion process articulated by \citet{grigoryan2009heat}, or the exponential-wrapped Gaussian introduced by \citet{chevallier2022wrapped}. For an in-depth exploration of sampling from the Riemannian Gaussian distribution defined above, we refer readers to Section \ref{sec_gauss_sample}. 

Furthermore, the subsequent theorem underscores the ability of the Riemannian Gaussian distribution, as defined herein, to meet the requirements of Gaussian Differential Privacy.

\begin{thm}\label{thm_gauss}
    Let $\m$ be a Riemannian manifold with lower bounded Ricci curvature and $f$ be a $\m$-valued summary with global sensitivity $\Delta$. The Riemannian Gaussian distribution with footprint $f(\DD)$ and rate $\sigma$ satisfies $\mu$-GDP for some $\mu >0$.  
\end{thm}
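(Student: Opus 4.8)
The plan is to work entirely through the $(\varepsilon,\delta)$ characterization of $\mu$-GDP in Definition~\ref{def_gdp_riemann}. Write $\eta_1=f(\DD)$ and $\eta_2=f(\DD')$ with $d(\eta_1,\eta_2)\le\Delta$, and let $Y_i\sim N_\m(\eta_i,\sigma^2)$ have densities $p_{\eta_i,\sigma}$. It suffices to exhibit a $\mu>0$ for which the privacy profile $\delta(\varepsilon):=\sup_A\big[\Pr(Y_1\in A)-e^\varepsilon\Pr(Y_2\in A)\big]$ satisfies $\delta(\varepsilon)\le\delta_\mu(\varepsilon)$ for every $\varepsilon\ge0$, the reverse direction following from the symmetry of $\eta_1,\eta_2$. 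The supremum is attained on $\{y:\,p_{\eta_1,\sigma}(y)>e^\varepsilon p_{\eta_2,\sigma}(y)\}$, so $\delta(\varepsilon)=\int_\m\big(p_{\eta_1,\sigma}-e^\varepsilon p_{\eta_2,\sigma}\big)_+\,d\nu\le\Pr_{Y_1}\!\big(L(Y_1)>\varepsilon\big)$, where the privacy loss is $L(y)=\log\frac{Z(\eta_2,\sigma)}{Z(\eta_1,\sigma)}+\frac{d(y,\eta_2)^2-d(y,\eta_1)^2}{2\sigma^2}$.

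The next step is to reduce $L$ to the single radial variable $R:=d(Y_1,\eta_1)$. The reverse triangle inequality gives $|d(y,\eta_2)-d(y,\eta_1)|\le\Delta$, hence $d(y,\eta_2)^2-d(y,\eta_1)^2\le\Delta\big(2\,d(y,\eta_1)+\Delta\big)$ and $L(y)\le C_1+\frac{\Delta}{\sigma^2}\,d(y,\eta_1)$ with $C_1=\log\frac{Z(\eta_2,\sigma)}{Z(\eta_1,\sigma)}+\frac{\Delta^2}{2\sigma^2}$. I would bound the normalizing-constant ratio without any volume lower bound: from $d(y,\eta_2)\ge d(y,\eta_1)-\Delta$ one gets $Z(\eta_2,\sigma)\le Z(\eta_1,\sigma)\,\mathbb{E}_{Y_1}\!\big[\exp\{\Delta R/\sigma^2\}\big]$ up to an additive bounded central term, so $C_1$ is finite once the radial moment generating function is, and the symmetric estimate controls $|C_1|$.

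The geometric input enters through the law of $R$. By the coarea formula $R$ has density proportional to $\exp\{-r^2/2\sigma^2\}\,S_{\eta_1}(r)$, where $S_{\eta_1}(r)$ is the area of the geodesic sphere of radius $r$. Since $\m$ has Ricci curvature bounded below, say $\mathrm{Ric}\ge(d-1)k$, the Bishop--Gromov comparison theorem bounds $S_{\eta_1}(r)$ by the sphere area of the constant-curvature model space, which grows at most exponentially in $r$. The Gaussian factor dominates, which at once shows $Z(\eta,\sigma)<\infty$, that $\mathbb{E}_{Y_1}[\exp\{\Delta R/\sigma^2\}]<\infty$ (closing the previous step), and that $\Pr(R>r)$ decays like $\exp(-r^2/c\sigma^2)$ for large $r$. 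Combined with the radial bound on $L$ this yields $\delta(\varepsilon)\le\Pr\big(R>\sigma^2(\varepsilon-C_1)/\Delta\big)$, a Gaussian-type decay in $\varepsilon$.

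Finally, $\delta_\mu(\varepsilon)$ itself has Gaussian decay whose rate relaxes as $\mu$ grows, with $\delta_\mu(0)=2\Phi(\mu/2)-1\to1$ while $\delta(0)=\mathrm{TV}(Y_1,Y_2)<1$. I would therefore pick $\mu$ large enough that $\delta_\mu$ dominates $\delta$: for large $\varepsilon$ by making the exponential rate of $\delta_\mu$ slower than that of $\delta$, and on the remaining compact range of $\varepsilon$ using $\delta_\mu(0)>\delta(0)$ together with monotonicity and continuity of both profiles. This delivers a finite $\mu$ with $\delta(\varepsilon)\le\delta_\mu(\varepsilon)$ for all $\varepsilon\ge0$, i.e.\ $\mu$-GDP. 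I expect the main obstacle to be precisely this uniform-in-$\varepsilon$ domination, matching the two-parameter shape of $\delta_\mu(\varepsilon)$ against the crude radial tail bound, together with the care needed to control the normalizing-constant ratio on a possibly non-homogeneous manifold where $Z(\eta,\sigma)$ genuinely varies with $\eta$.
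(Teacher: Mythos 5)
Your proposal is correct in its essentials and shares its geometric core with the paper's proof: both bound the privacy loss, via the triangle inequality, by an affine function of the radial distance $d(y,\eta_1)$, so that the rejection region lies outside a ball whose radius grows linearly in $\varepsilon$, and both invoke Bishop--Gromov (you through the sphere-area/coarea form, the paper through ball volumes in three curvature cases) to conclude that the privacy profile decays like $e^{-\Omega(\varepsilon^2)}$. The genuine difference is the final step. The paper converts this decay into GDP by citing Theorem 3.3 of \citet{liu2022identification} (restated as Theorem \ref{thm_uGDP}): $\mu$-GDP holds for some finite $\mu$ if and only if $\mu_0=\sqrt{\lim_{\varepsilon\to\infty}\varepsilon^2/(-2\log\delta(\varepsilon))}<\infty$. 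That citation absorbs exactly the ``uniform-in-$\varepsilon$ domination'' you flag as your main obstacle; you instead prove it by hand, and your two-regime argument does close: fix $\mu_1$ with $1/(2\mu_1^2)$ strictly below your Gaussian tail rate, giving $\delta_{\mu_1}\ge\delta$ beyond some $\varepsilon_0$, then enlarge to $\mu\ge\mu_1$ with $\delta_\mu(\varepsilon_0)>\delta(0)$, using that $\delta_\mu(\varepsilon)$ is decreasing in $\varepsilon$ and increasing in $\mu$ while $\delta(0)=\sup\mathrm{TV}<1$. Your route is therefore self-contained where the paper's is shorter but leans on an external characterization; yours also treats the normalizing constants explicitly, which the paper does not (its constant $C$ is carried along unexamined). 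One caution applying to both arguments: the profile is a supremum over adjacent pairs, so the ratio $Z(\eta_2,\sigma)/Z(\eta_1,\sigma)$, your tail constant $c$, and $\sup_{\DD\simeq\DD'}\mathrm{TV}<1$ must all be uniform over pairs with $d(\eta_1,\eta_2)\le\Delta$; under a lower Ricci bound alone, $Z(\eta,\sigma)$ and unit-ball volumes can degenerate along the manifold (e.g.\ into a cusp of a finite-volume hyperbolic manifold), so pointwise-in-pair bounds do not suffice. All three uniformities do follow from the monotonicity form of Bishop--Gromov (non-increase of $S_\eta(r)/S_K(r)$ and of the corresponding ball-volume ratios), since the point-dependent factors then cancel in every ratio your argument needs; adding that observation would make your sketch fully rigorous.
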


\begin{proof}
    See Appendix \ref{proof_uGDP}. 
\end{proof}


While Theorem \ref{thm_gauss} confirms the potential of the Riemannian Gaussian distribution to achieve GDP, it leaves the relationship between the privacy budget ($\mu$) and the rate ($\sigma$) undefined. The subsequence theorem establishes such a connection:  

\begin{thm}[Riemannian Gaussian Mechanism]\label{thm_gauss_2}
    Let $\m$ be a Riemannian manifold with lower bounded Ricci curvature and $f$ be a $\m$-valued summary with global sensitivity $\Delta$. The Riemannian Gaussian distribution with footprint $f(D)$ and rate $\sigma >0$ is $\mu$-GDP if and only if $\mu$ satisfies the following condition, $\forall \varepsilon \ge 0$,
    \begin{align}\label{ineq_gdp}
        \sup_{\DD \simeq \DD'}\int_A  p_{\eta_1, \sigma}(y) \; d\nu(y) - e^\varepsilon  \int_A  p_{\eta_2, \sigma}(y) \; d\nu(y) \le \delta_\mu(\varepsilon)
    \end{align}
    where $A:= \{y \in \m: p_{\eta_1, \sigma}(y) / p_{\eta_2, \sigma}(y) \ge e^\varepsilon\}$ and $\eta_1 := f(\DD), \eta_2 := f(\DD')$. 
\end{thm}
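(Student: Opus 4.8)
The plan is to argue directly from Definition \ref{def_gdp_riemann}, which recasts $\mu$-GDP as the requirement that the mechanism be $(\varepsilon, \delta_\mu(\varepsilon))$-DP for every $\varepsilon \ge 0$. Unwinding the $(\varepsilon,\delta)$-DP definition and using that $M(\DD)$ and $M(\DD')$ admit the densities $p_{\eta_1, \sigma}$ and $p_{\eta_2, \sigma}$ with respect to $\nu$ (guaranteed by Definition \ref{def_riemann_gauss}), the mechanism is $\mu$-GDP exactly when, for all $\varepsilon \ge 0$, all neighboring pairs $\DD \simeq \DD'$, and every measurable $B \subseteq \m$,
\[
    \int_B p_{\eta_1, \sigma}(y)\, d\nu(y) - e^\varepsilon \int_B p_{\eta_2, \sigma}(y)\, d\nu(y) \le \delta_\mu(\varepsilon).
\]
The entire content of the theorem is then to show that the quantifier ``for every measurable $B$'' can be replaced by the single canonical set $A$ from \eqref{ineq_gdp}, after which taking $\sup_{\DD \simeq \DD'}$ recovers the stated condition.

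The main step is a Neyman--Pearson-type argument identifying the maximizing set. I would rewrite the left-hand side as $\int_B \bigl(p_{\eta_1,\sigma}(y) - e^\varepsilon p_{\eta_2,\sigma}(y)\bigr)\, d\nu(y)$. For fixed $\varepsilon$ and fixed neighbors the integrand is a fixed $\nu$-integrable function, so the integral over $B$ is largest precisely when $B$ collects exactly the points where this integrand is nonnegative, namely $\{y : p_{\eta_1,\sigma}(y) - e^\varepsilon p_{\eta_2,\sigma}(y) \ge 0\}$. Since both densities are strictly positive everywhere (the exponential factor is positive and $Z(\eta,\sigma)$ is finite and positive), this set coincides exactly with $A = \{y : p_{\eta_1,\sigma}(y)/p_{\eta_2,\sigma}(y) \ge e^\varepsilon\}$. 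Hence the supremum over measurable sets is attained at $A$, and for a fixed neighboring pair the $(\varepsilon, \delta_\mu(\varepsilon))$-DP condition is equivalent to the single inequality evaluated at $A$.

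With the reduction to this canonical set in hand, I would take the supremum over neighboring pairs and close both directions. For the forward direction, if the mechanism is $\mu$-GDP then it is $(\varepsilon,\delta_\mu(\varepsilon))$-DP for every $\varepsilon$ and every measurable set, in particular for $A = A(\DD,\DD')$, and taking $\sup_{\DD \simeq \DD'}$ yields \eqref{ineq_gdp}. Conversely, if \eqref{ineq_gdp} holds for all $\varepsilon \ge 0$, then for each fixed pair and each $\varepsilon$ the bound holds at the maximizing set, and by the maximality just established it holds for every measurable $B$; this is exactly $(\varepsilon, \delta_\mu(\varepsilon))$-DP for all $\varepsilon \ge 0$, i.e. $\mu$-GDP. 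Note that the symmetry of $\simeq$ means the supremum over ordered pairs already covers both orderings of $\eta_1, \eta_2$.

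The genuinely delicate points are all measure-theoretic bookkeeping rather than geometric: confirming absolute continuity with respect to $\nu$ so the density representation is valid, and justifying that the outer supremum over neighbors commutes with the fixed-set reduction. The latter causes no trouble because the maximizing set $A = A(\DD,\DD')$ is defined pointwise for each pair, so the outer supremum simply ranges over integrands that have already been individually optimized. I expect no curvature input whatsoever at this stage; Theorem \ref{thm_gauss} already supplies the existence of a finite admissible $\mu$, and the present statement is a purely analytic characterization of \emph{which} $\mu$ satisfy the GDP requirement.
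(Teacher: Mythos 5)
Your proposal is correct and takes essentially the same approach as the paper: the paper's own proof is a one-line appeal to Definition \ref{def_gdp_riemann}, Theorem \ref{thm_gauss}, and Theorem 5 of \citet{balle2018a}, and that cited theorem is precisely the privacy-profile (hockey-stick divergence) characterization you derive by hand, namely that the likelihood-ratio set $A$ maximizes $\int_B \bigl(p_{\eta_1,\sigma}(y) - e^{\varepsilon} p_{\eta_2,\sigma}(y)\bigr)\, d\nu(y)$ over measurable $B$. The only difference is that your Neyman--Pearson argument makes this step self-contained where the paper outsources it to the literature.
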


\begin{proof}
    See Appendix \ref{proof_gauss_2}. 
\end{proof}

Given the rate $\sigma$, Theorem \ref{thm_gauss_2} provides us a way of computing the privacy budget $\mu$ through the inequality (\ref{ineq_gdp}). When $\m = \R^d$, the set $A$ enjoys a tractable form, $A = \{ y \in \R^d: \left\langle y - \eta_2, \eta_1 - \eta_2 \right\rangle \ge \frac12 (2 \sigma^2 \varepsilon / \|\eta_1 - \eta_2\| + \|\eta_1 - \eta_2\|)\}$, and the inequality (\ref{ineq_gdp}) then reduces to,
\begin{align*}
        \sup_{D \simeq D'}\Phi \left( - \frac{\sigma \varepsilon}{\|\eta_1 - \eta_2\|} + \frac{\|\eta_1 - \eta_2\|}{2 \sigma} \right) - e^\varepsilon  \Phi \left( - \frac{\sigma \varepsilon}{\|\eta_1 - \eta_2\|} + \frac{\|\eta_1 - \eta_2\|}{2 \sigma} \right) \le \delta_\mu(\varepsilon)
\end{align*}
where the equality holds if and only if $\sigma = \Delta / \mu$, which reduces to the Gaussian mechanism on Euclidean spaces in \citet{dong2019gaussian}. It's worth pointing out that on $\R^d$, the Pythagorean theorem allows us to reduce the integrals to one-dimensional integrals resulting in a simple solution for any dimension $d$. Unfortunately, the lack of Pythagorean theorem on non-Euclidean spaces makes it difficult to evaluate the integrals on manifolds of dimensions greater than one. It's known that any smooth, connected one-dimensional manifold is diffeomorphic either to the unit circle $S^1$ or to some interval of $\R$ \citep{milnor1997topology}. Therefore, we encompass all one-dimensional cases by presenting the following result on $S^1$.

\begin{coro}[Riemannian Gaussian Mechanism on $S^1$]\label{thm_gauss_s1}
    Let $f$ be a $S^1$-valued summary with global sensitivity $\Delta$. The Riemannian Gaussian distribution $N_\m\left(f(\DD), \sigma^2 \right)$ is $\mu$-GDP if and only if $\mu$ satisfies the following condition, $\forall \varepsilon \in [0, \pi \Delta/ (2 \sigma^2)], h(\sigma, \varepsilon, \Delta) \le \delta_\mu(\varepsilon)$ where
    \begin{align*}
        h(\sigma, \varepsilon, \Delta) &= \frac1{C(\sigma)} \left[ \Phi \left(- \frac{\sigma \varepsilon}{\Delta} + \frac{\Delta}{2\sigma} \right) - e^\varepsilon \Phi \left(- \frac{\sigma \varepsilon}{\Delta} - \frac{\Delta}{2\sigma} \right)\right] \\
        &- \frac1{C(\sigma)} \left[ \Phi \left( \frac{\sigma \varepsilon}{\Delta} + \frac{\Delta}{2\sigma} - \frac{\pi}{\sigma} \right) - e^\varepsilon \Phi \left(\frac{\sigma \varepsilon}{\Delta} - \frac{\Delta}{2\sigma} + \frac{\pi}{\sigma}\mathbf{1}_{\varepsilon \le \frac{\Delta^2}{2\sigma^2}} - \frac{\pi}{\sigma}\mathbf{1}_{\varepsilon > \frac{\Delta^2}{2\sigma^2}}  \right) \right] \\
        &- e^\varepsilon \mathbf{1}_{\varepsilon \le \frac{\Delta^2}{2\sigma^2}}
    \end{align*}
    with $C(\sigma) = \Phi(\pi / \sigma) - \Phi(- \pi/\sigma)$.
\end{coro}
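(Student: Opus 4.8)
The plan is to specialize the general criterion of Theorem \ref{thm_gauss_2} to the circle and convert the abstract set $A$ together with the two integrals in (\ref{ineq_gdp}) into explicit expressions in the standard-normal CDF $\Phi$. Since $S^1$ is homogeneous and rotations act by isometries, the normalizing constant $Z(\eta,\sigma)$ is independent of the footprint; parametrizing by arclength $\theta\in(-\pi,\pi]$ with a footprint at the origin gives $Z(\sigma)=\int_{-\pi}^{\pi}\exp(-\theta^2/2\sigma^2)\,d\theta=\sigma\sqrt{2\pi}\,C(\sigma)$ with $C(\sigma)=\Phi(\pi/\sigma)-\Phi(-\pi/\sigma)$, which already explains the prefactor $1/C(\sigma)$ in $h$. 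By the same invariance I may fix $\eta_2=0$ and $\eta_1=t$ where $t=d(\eta_1,\eta_2)\in(0,\Delta]$ (note $\Delta\le\pi$, as distances on $S^1$ never exceed $\pi$), reducing the supremum in (\ref{ineq_gdp}) to a one-parameter problem in $t$.

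Next I would make $A$ explicit. Writing the log-likelihood ratio as $\log(p_{\eta_1}/p_{\eta_2})=g(\theta)/2\sigma^2$ with $g(\theta)=d(\theta,\eta_2)^2-d(\theta,\eta_1)^2$, the set $A$ is the super-level set $\{g\ge 2\sigma^2\varepsilon\}$. The function $g$ is piecewise linear on the circle: on the arc joining the two footprints and continuing to the antipode of $\eta_2$ it equals $2t\theta-t^2$, while across the antipode of $\eta_1$ it switches to the expression coming from the opposite geodesic to $\eta_1$. Hence $A$ is a single arc centred at the antipode of $\eta_2$ (where $g$ attains its maximum $t(2\pi-t)$), with a near endpoint $\theta^{*}$ on the $\eta_1$-side and a far endpoint $\theta^{**}$ on the wraparound side. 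On each sub-arc the squared Riemannian distance to a footprint is a shifted parabola in $\theta$, so the substitution $u=(\theta-\text{footprint})/\sigma$ turns each $\int_A p_{\eta_i}\,d\nu$ into a difference of values of $\Phi$, and the endpoints $\theta^{*},\theta^{**}$ yield precisely the arguments $\pm\sigma\varepsilon/\Delta\pm\Delta/2\sigma$ together with the $\pi/\sigma$ shifts appearing in $h$ once $t=\Delta$ is substituted.

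The main obstacle, and the origin of every non-Euclidean feature of $h$, is the cut locus. Because the distance on $S^1$ is the minimum of two arclengths, $\theta\mapsto d(\theta,\eta_i)^2$ is only piecewise smooth, with a kink at the antipode of $\eta_i$; the arc $A$ necessarily straddles the antipode of $\eta_2$, and on its far portion the distance to $\eta_1$ is realized by the geodesic going the opposite way. Tracking which geodesic is active forces a case split according to whether the near endpoint $\theta^{*}$ has passed $\eta_1$, i.e.\ whether $g(\eta_1)=\Delta^2$ is at most $2\sigma^2\varepsilon$ or not. This is exactly the threshold $\varepsilon=\Delta^2/2\sigma^2$ recorded by the indicators $\mathbf{1}_{\varepsilon\le\Delta^2/2\sigma^2}$ and the sign flip on the $\pi/\sigma$ term, and it is also what produces the extra indicator-weighted $-e^{\varepsilon}$ correction (arising from $e^\varepsilon$ times the full circle-mass of $p_{\eta_2}$ in $\int_A p_{\eta_2}\,d\nu$) relative to the Euclidean formula.

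Finally I would verify that the supremum over neighbouring datasets is attained at the maximal displacement $t=\Delta$, by showing the assembled expression is monotone increasing in $t$ on $(0,\Delta]$; substituting $t=\Delta$ then gives $h(\sigma,\varepsilon,\Delta)$. The admissible range $\varepsilon\in[0,\pi\Delta/2\sigma^2]$ should emerge from requiring $A$ to be a nonempty proper arc on which the privacy constraint is active, the inequality holding automatically outside it. Pinning down this range precisely and confirming the monotonicity in $t$ are the remaining delicate but essentially computational checks.
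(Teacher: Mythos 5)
Your high-level route is the paper's route: place a single pair of footprints at distance $\Delta$ symmetrically on the circle (the paper justifies restricting to one pair via Theorem \ref{thm_homo}; you instead start from Theorem \ref{thm_gauss_2} and promise a monotonicity-in-$t$ argument), identify $A$ as an arc containing the antipode of $\eta_2$, and convert both integrals into differences of $\Phi$-values, with a case split created by the cut locus. The genuine gap is that the decisive step --- ``the endpoints $\theta^{*},\theta^{**}$ yield precisely the arguments $\pm\sigma\varepsilon/\Delta\pm\Delta/2\sigma$ \dots appearing in $h$'' --- is asserted, never computed, and it contradicts your own (correct) description of $g$. On the wraparound branch of $A$ (past the antipode of $\eta_2$, before the antipode of $\eta_1$), $d(\cdot,\eta_2)$ decreases at unit speed while $d(\cdot,\eta_1)$ still increases at unit speed, so the slope of $g$ there is $-2\bigl(d(\cdot,\eta_1)+d(\cdot,\eta_2)\bigr)=-2(2\pi-t)$, not $-2t$. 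Hence $\theta^{**}=\pi+t/2-\sigma^{2}\varepsilon/(2\pi-t)$, the $\Phi$-arguments it generates carry $\sigma\varepsilon/(2\pi-\Delta)$, which appears nowhere in $h$, and --- by your own remark that $\max g=t(2\pi-t)$ --- the arc $A$ is nonempty exactly when $\varepsilon\le\Delta(2\pi-\Delta)/(2\sigma^{2})$, not on the stated interval $[0,\pi\Delta/(2\sigma^{2})]$. So your outline, executed faithfully, does not terminate at the stated $h$. This is precisely the point where the paper's own proof simply asserts $A=[\pi+\sigma^{2}\varepsilon/\Delta,\,2\pi-\sigma^{2}\varepsilon/\Delta]$, i.e.\ both endpoints receding at the same rate $\sigma^{2}\varepsilon/\Delta$; your branch analysis and that assertion are irreconcilable, and adjudicating between them is the entire mathematical content of the corollary, not a ``computational check'' to be deferred.

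Two further slips. Your mechanism for the case split at $\varepsilon=\Delta^{2}/(2\sigma^{2})$ --- whether $\theta^{*}$ has passed the footprint $\eta_1$ --- cannot be right: $d(\theta,\eta_1)^{2}$ is smooth at $\theta=\eta_1$ (squaring removes the kink of the distance at the footprint; kinks of the squared distance occur only at antipodes), so neither integrand changes character when $\theta^{*}$ crosses $\eta_1$. In the paper's computation the split is governed by whether the antipode of $\eta_2$ lies inside $A$, which determines whether $\int_A p_{\eta_2,\sigma}\,d\nu$ must be broken into two Gaussian pieces; relatedly, on the far portion of $A$ it is the distance to $\eta_2$, not to $\eta_1$, that is realized by the opposite geodesic, since the antipode of $\eta_1$ lies beyond $\theta^{**}$. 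Finally, the reduction of $\sup_{\DD\simeq\DD'}$ to the extremal separation $t=\Delta$ is a real obligation in your formulation (the paper discharges it by invoking Theorem \ref{thm_homo}); leaving both it and the endpoint computation as ``remaining checks'' means the two nontrivial ingredients of your proof are both absent.
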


\begin{proof}
    See Appendix \ref{proof_gauss_s1}.
\end{proof}

In summary, Corollary \ref{thm_gauss_s1} provides us the analytical form for the integrals in (\ref{ineq_gdp}). By specifying the rate $\sigma$ and sensitivity $\Delta$, it becomes feasible to compute the privacy budget $\mu$, which is the smallest $\mu$ such that $h(\sigma, \varepsilon, \Delta) \le \delta_\mu(\varepsilon)$ for all $\varepsilon \in [0, \pi \Delta/(2\sigma^2)]$.  The systematic procedure is summarized in Algorithm \ref{alg_comput_s1}. It is important to emphasize that the result in Corollary \ref{thm_gauss_s1} together with the existing Euclidean space result covers all one-dimensional cases, and for manifolds with dimension $d>1$, we will tackle it in Section \ref{sec_numerical}.



\begin{algorithm}[htbp!]
   \caption{Computing $\mu$ on $S^1$}
  \label{alg_comput_s1}
   \textbf{Input:} Sensitivity $\Delta \in (0, \pi]$ , rate $\sigma$, number of $\varepsilon$ used $n_\varepsilon$\\
  \textbf{Output:} $\mu$;\\ [-1em]
  \begin{algorithmic}[1] 
    \STATE Set $\varepsilon_{\max} = \pi \Delta / (2 \sigma^2)$. \\
    \FOR{each $\varepsilon$ in $\{k, 2k, \dots, n_{\varepsilon}k\}$ where $k = \varepsilon_{\max}/n_{\varepsilon}$}
        \STATE Compute $l_\varepsilon = h(\sigma, \varepsilon, \Delta)$ as in Corollary \ref{thm_gauss_s1} and  $\mu_\varepsilon$ through $l_\varepsilon = \delta_\mu(\varepsilon)$. \\    
    \ENDFOR
    \STATE Compute $\mu = \max_\varepsilon \mu_\varepsilon$. \\
    \STATE \textbf{Return}: $\mu$.
  \end{algorithmic}
 \end{algorithm}

\section{Numerical Approach}
\label{sec_numerical}

In Section \ref{sec_gdp}, we demonstrate that our proposed Riemannian Gaussian distribution can be used to achieve GDP in Theorem \ref{thm_gauss} and  \ref{thm_gauss_2}. Furthermore, we document our method in Algorithm \ref{alg_comput_s1} to compute the privacy budget $\mu$ in $S^1$. This and already existing Euclidean results encompass all the one-dimensional Riemannian manifolds. However, for general Riemannian manifolds of dimension $d > 1$, the integrals in inequality (\ref{ineq_gdp}) are difficult to compute (see Section \ref{sec_gdp} for more explanation). One of the central difficulties is the dependence of the normalizing constant $Z(\eta, \sigma)$ on the footprint $\eta$ makes the expression of $A$ intractable. To avoid this dependence, we introduce the concept of homogeneous Riemannian manifolds.

\begin{definition}[Definition 4.6.1 in \citet{berestovskii2020riemannian}]
     A Riemannian manifold $(\m,g)$ is called a \textbf{homogeneous Riemannian manifold} if a Lie group $G$ acts transitively and isometrically on $\m$.
\end{definition}

Homogeneous Riemannian manifolds encompass a broad class of manifolds that are commonly encountered in statistics such as the (hyper)sphere \citep{bhattacharya2012nonparametric, mardia2000directional}, the space of SPD Matrices\citep{pennec2019riemannian, said2017gaussian, hajri2016laplace}, the Stiefel manifold \citep{chakraborty2019stiefel, Turaga2008grassmann} and the Grassmann manifold \citep{Turaga2008grassmann}. It's a more general class than Riemannian symmetric space, which is a common setting used in geometric statistics \citep{cornea2017, Asta2014KernelDE, said2018symmetric, said2021statistical, chevallier2022wrapped}. Please refer to Appendix \ref{app_homo} for more detail.

Informally, a homogeneous Riemannian manifold looks geometrically the same at every point. The transitive property required in the definition implies that any homogeneous Riemannian manifold $\m$ only has one orbit. Therefore, we have the following proposition. 

\begin{prop}\label{coro_homo}
    If $\m$ is a homogeneous Riemannian manifolds, then $Z(\eta_1, y) = Z(\eta_2, y)$ for any $\eta_1, \eta_2 \in \m$. 
\end{prop}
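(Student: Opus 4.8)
The plan is to exploit the two defining features of a homogeneous Riemannian manifold—transitivity and the isometric action—to transport $\eta_1$ onto $\eta_2$ by a symmetry that leaves the integral defining $Z$ unchanged. First I would invoke transitivity: since the Lie group $G$ acts transitively on $\m$, there is an element $\phi \in G$ with $\phi(\eta_1) = \eta_2$. Because $G$ acts by isometries, $\phi : \m \to \m$ is a Riemannian isometry, hence in particular a diffeomorphism preserving the Riemannian distance, so that $d(\phi(x), \phi(x')) = d(x, x')$ for all $x, x' \in \m$.

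The key analytic step is a change of variables $y = \phi(z)$ in the integral $Z(\eta_2, \sigma) = \int_\m \exp\{-d(y,\eta_2)^2/(2\sigma^2)\}\, d\nu(y)$ (I read the second argument in the statement as the rate $\sigma$, the integration variable being renamed). Two facts make this substitution transparent. First, since $\phi$ is a bijection of $\m$ onto itself, integrating over all $y \in \m$ is the same as integrating over all $z \in \m$. Second, and this is the crucial ingredient, an isometry preserves the Riemannian volume measure, so $\phi_\# \nu = \nu$ and the Jacobian factor arising in the change of variables is identically one. Combining this with the distance identity $d(\phi(z),\eta_2) = d(\phi(z), \phi(\eta_1)) = d(z, \eta_1)$ converts the integrand for $Z(\eta_2, \sigma)$ into exactly the integrand for $Z(\eta_1, \sigma)$, yielding $Z(\eta_2, \sigma) = Z(\eta_1, \sigma)$. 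Since $\eta_1, \eta_2$ were arbitrary and the argument is valid for every rate $\sigma$, the proposition follows.

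The only step requiring genuine justification—and the one I expect to be the main obstacle—is the measure-invariance $\phi_\# \nu = \nu$. This is where the construction of $\nu$ via $d\nu = \sqrt{\det g}\, d\lambda$ in local charts enters: because an isometry pulls the metric $g$ back to itself, the local density $\sqrt{\det g}$ transforms by precisely the same Jacobian that governs the Lebesgue change of variables $d\lambda$, and the two cancel. Since this is a standard fact in Riemannian geometry, in the write-up I would cite it rather than rederive it, after which the remainder of the argument is a one-line substitution.
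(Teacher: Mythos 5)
Your proposal is correct and follows essentially the same route as the paper: transitivity supplies an isometry $g$ carrying $\eta_1$ to $\eta_2$, and the change of variables in the integral goes through because the induced Riemannian measure $\nu$ is invariant under isometries (a fact the paper cites from Helgason, and which you justify via the local formula $d\nu = \sqrt{\det g}\,d\lambda$), combined with $d(g\cdot x, g\cdot y) = d(x,y)$. You also correctly resolved the statement's typographical slip, reading $Z(\eta_i, y)$ as the normalizing constant $Z(\eta_i, \sigma)$, which is how the paper uses it.
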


Therefore, on homogeneous Riemannian manifolds, we can simplify the set $A$ in Theorem \ref{thm_gauss_2} to $A = \left\{y \in \m: d(\eta_2, y)^2 - d(\eta_1, y)^2 \ge 2\sigma^2 \varepsilon \right\}$. To further simplify (\ref{ineq_gdp}), we will need a much stronger assumption than homogeneous Riemannian manifolds. In particular, we require the condition of constant curvature. 

\begin{thm}[Riemannian Gaussian Mechanism on Manifolds with Constant Curvature]\label{thm_homo}
    Let $\m$ be a Riemannian manifold with constant curvature and $f$ be a $\m$-valued summary with global sensitivity $\Delta$. The Riemannian Gaussian distribution with footpoint $f(\DD)$ and rate $\sigma > 0$ is $\mu$-GDP if and only if $\mu$ satisfies the following condition: $\exists \eta_1, \eta_2$ such that $d(\eta_1, \eta_2) = \Delta$ and $\forall \varepsilon \ge 0$
    \begin{align}\label{integral_ineq}
        \int_A p_{ \eta_1, \sigma }(y)  \; d\nu(y) - e^\varepsilon  \int_A p_{ \eta_2, \sigma }(y) \; d\nu(y) \le \delta_\mu(\varepsilon)
    \end{align}
    where $A = \left\{y \in \m: d(\eta_2, y)^2 - d(\eta_1, y)^2 \ge 2\sigma^2 \varepsilon \right\}$.
\end{thm}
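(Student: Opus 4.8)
The plan is to deduce this from Theorem \ref{thm_gauss_2}, whose condition already reads $\sup_{\DD \simeq \DD'} F(\eta_1,\eta_2,\varepsilon) \le \delta_\mu(\varepsilon)$ for every $\varepsilon \ge 0$, where I abbreviate
\[
F(\eta_1,\eta_2,\varepsilon) := \int_A p_{\eta_1,\sigma}(y)\,d\nu(y) - e^\varepsilon \int_A p_{\eta_2,\sigma}(y)\,d\nu(y),
\]
and $A$ is the likelihood-ratio region of that theorem. Relative to Theorem \ref{thm_gauss_2}, two things must be accomplished: (i) rewrite $A$ in the purely metric form stated here, and (ii) show that the supremum over neighbouring datasets is attained at any pair of footprints at the maximal distance $\Delta$, which is what turns the $\sup$ into the ``$\exists\,\eta_1,\eta_2$ with $d(\eta_1,\eta_2)=\Delta$'' clause.

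For (i) I would invoke Proposition \ref{coro_homo}: a manifold of constant curvature is homogeneous, so $Z(\eta_1,\sigma) = Z(\eta_2,\sigma)$ and the normalizing constants cancel in the ratio. Writing $p_{\eta_1,\sigma}(y)/p_{\eta_2,\sigma}(y) = \exp\{(d(y,\eta_2)^2 - d(y,\eta_1)^2)/(2\sigma^2)\}$, the defining inequality $p_{\eta_1,\sigma}(y)/p_{\eta_2,\sigma}(y) \ge e^\varepsilon$ becomes exactly $d(\eta_2,y)^2 - d(\eta_1,y)^2 \ge 2\sigma^2\varepsilon$, yielding the set $A$ of the statement. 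This step uses only homogeneity and is a direct computation.

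For (ii) I would first use that a space of constant curvature is two-point homogeneous: any two equidistant pairs of points are related by an isometry $\phi$. Since $\phi$ preserves both the geodesic distance $d$ and the induced measure $\nu$, it maps $p_{\eta_i,\sigma}$ to $p_{\phi(\eta_i),\sigma}$ and the region $A(\eta_1,\eta_2,\varepsilon)$ to $A(\phi(\eta_1),\phi(\eta_2),\varepsilon)$, so $F$ is isometry-invariant and hence depends on $(\eta_1,\eta_2)$ only through $r := d(\eta_1,\eta_2)$; write $F(r,\varepsilon)$. As every neighbouring pair satisfies $r \le \Delta$, the inner supremum collapses to $\sup_{0 \le r \le \Delta} F(r,\varepsilon)$. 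The crux is then the monotonicity claim that $F(r,\varepsilon)$ is non-decreasing in $r$ for each fixed $\varepsilon$, which locates the worst case at $r = \Delta$ and, together with the isotropy just established, lets me replace the supremum by the value at an arbitrary pair with $d(\eta_1,\eta_2)=\Delta$.

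Proving this monotonicity is the step I expect to be the main obstacle. I would recast $F(r,\varepsilon)$ as a one-dimensional quantity: letting $P_{\eta_i}$ denote the Riemannian Gaussian law with density $p_{\eta_i,\sigma}$ and $S := d(\cdot,\eta_2)^2 - d(\cdot,\eta_1)^2$, the reflection isometry $\rho$ that swaps $\eta_1,\eta_2$ (fixing their bisector) satisfies $S\circ\rho = -S$ and $\rho_* P_{\eta_2} = P_{\eta_1}$, so the law of $S$ under $P_{\eta_2}$ equals the law of $-S$ under $P_{\eta_1}$. Consequently
\[
F(r,\varepsilon) = P_{\eta_1}\!\left(S \ge 2\sigma^2\varepsilon\right) - e^\varepsilon\, P_{\eta_1}\!\left(S \le -2\sigma^2\varepsilon\right),
\]
so everything is governed by the single distribution of $S$ under $P_{\eta_1}$. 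It then remains to show this expression increases with $r$, which I would establish either by exhibiting a stochastic ordering of the laws of $S$ as $r$ grows — splitting the manifold about the bisector of $\eta_1,\eta_2$ and exploiting the rotational symmetry of the constant-curvature metric about the connecting geodesic to reduce to a planar integral — or by differentiating $F(r,\varepsilon)$ in $r$ and checking that the sign is nonnegative. Pinning down this sign is the delicate analytic point; once it is in hand, sufficiency (for admissible $r$, $F(r,\varepsilon) \le F(\Delta,\varepsilon) \le \delta_\mu(\varepsilon)$) and necessity (evaluate at the extremal distance-$\Delta$ pair) both follow, completing the equivalence.
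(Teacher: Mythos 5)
Up to the step you call the crux, your route coincides with the paper's own proof: Theorem \ref{thm_gauss_2} supplies the supremum form of the condition, Proposition \ref{coro_homo} cancels the normalizing constants so that $A$ takes the stated metric form, and isometry invariance shows that
\[
F(\eta_1,\eta_2,\varepsilon) \;=\; \int_A p_{\eta_1,\sigma}\,d\nu \;-\; e^\varepsilon\int_A p_{\eta_2,\sigma}\,d\nu
\]
depends on $(\eta_1,\eta_2)$ only through $r = d(\eta_1,\eta_2)$. (The paper establishes this invariance by a three-case argument — rotational symmetry about a fixed footprint plus a change of variables under the isometry group action — while you invoke two-point homogeneity directly; these are the same idea, and your reflection identity expressing $F$ through the law of $S = d(\cdot,\eta_2)^2 - d(\cdot,\eta_1)^2$ under a single Riemannian Gaussian is also correct.) This part of your proposal is sound.

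The genuine gap is the step you flag and then leave open: the monotonicity of $F(r,\varepsilon)$ in $r$. It is indispensable for the ``if'' direction of the equivalence as you have set it up — a distance-$\Delta$ pair satisfying (\ref{integral_ineq}) says nothing, by itself, about neighbouring datasets whose summaries lie at distance $r < \Delta$, and such pairs are included in the supremum of Theorem \ref{thm_gauss_2} because global sensitivity is only an upper bound on $d(f(\DD), f(\DD'))$. The stochastic-ordering or derivative-sign claim that would close this is precisely the hard analytic content; it is nontrivial even in $\R^d$, where it amounts to monotonicity of the Gaussian privacy profile in the sensitivity, and on the sphere the compactness of the space makes the behaviour of $F$ in $r$ even less obvious. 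Since you neither prove it nor cite a result that yields it, the proposal is an incomplete proof. For comparison, the paper's own argument never confronts this issue at all: it proves only that $F$ is constant over pairs at distance exactly $\Delta$ and implicitly identifies the supremum with that common value. So you have correctly isolated a point the paper glosses over, but your attempt does not close it either; to finish along your lines you would need to actually establish $F(r,\varepsilon) \le F(\Delta,\varepsilon)$ for all $r \le \Delta$, or else reformulate the reduction so that only distance-$\Delta$ pairs ever need to be considered.
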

\begin{proof}
    See Appendix \ref{proof_homo}. 
\end{proof}

Theorem \ref{thm_homo} tells us that instead of evaluating the integrals of \ref{ineq_gdp} on every neighboring pairs $\eta_1$ and $\eta_2$, we only need to check one such pair. Despite the convenience it provides us, evaluating the integrals remains a challenge, even for an elementary space like $S^d$ with $d > 1$. To circumvent this challenge, we employ the MCMC technique for the integral computations. The main idea is simple: Let $Y_1,Y_2, \dots, Y_n$ be independent and identically distributed random variables from $N_{\m}(\eta, \sigma^2)$, and denotes $Z_i = \mathbf{1}\left\{ \frac1{2\sigma^2} \left( - d(\eta_1, Y_i)^2 + d(\eta_2, Y_i)^2 \right) \ge \varepsilon\ \right\}$. By the strong law of large number \citep{dudley2002real}, we then have
\[
\frac1n \sum_{i=1}^n Z_i \rightarrow \int_A p_{\eta, \sigma}(y) d\nu(y) \quad \text{a.s.}.
\]
By using $\frac1n \sum_{i=1}^n Z_i$ as an approximation, we avoid the challenge of evaluating the integrals analytically. The detailed algorithm is documented in Algorithm \ref{alg_comput}. It's known that any space of constant curvature is isomorphic to one of 
the spaces: Euclidean space, sphere, and hyperbolic space \citep{vinberg1993geometry, woods1901}. Therefore, Algorithm \ref{alg_comput} offers a straightforward and practical method for assessing the privacy budget $\mu$ on spheres and hyperbolic spaces. Furthermore, Algorithm \ref{alg_comput} can be extended to a more general class of Riemannian manifold by sampling more than one pair of $\eta, \eta'$ in step 1. The determination of the number of pairs being sampled and the selection method to ensure sufficient dissimilarity among the sampled pairs is an aspect that requires further investigation.


\begin{algorithm}[htbp!]
   \caption{Computing $\mu$ on Manifolds with Constant Curvature}
  \label{alg_comput}
   \textbf{Input:} Sensitivity $\Delta$, rate $\sigma^2$, Monte Carlo sample size $n$, number of $\varepsilon$ used $n_\varepsilon$, maximum $\varepsilon$ used $\varepsilon_{\max}$, number of MCMC samples $m$ \\
  \textbf{Output:} privacy budget $\mu$;\\ [-1em]
  \begin{algorithmic}[1] 
    \STATE Sample a random point $\eta$ on $\m$ and another point $\eta'$ on the sphere center at $\eta$ with radius $\Delta$. \\
    \FOR{each $j$ in $1,2, \dots, m$}
        \STATE Sample $y_{1j}, \dots, y_{nj} \sim N_\m(\eta, \sigma^2), y_{1j}', \dots, y_{nj}' \sim N_\m(\eta', \sigma^2)$. \\
        \STATE Compute $d_{ij} = d(\eta', y_{ij})^2 - d(\eta, y_{ij})^2$ and $d'_{ij} = d(\eta', y_{ij}')^2 - d(\eta, y_{ij}')^2$ for $i$ in $1,2,\dots, n$. \\
        \FOR{each $\varepsilon$ in $\{k, 2k, \dots, n_{\varepsilon}k\}$ where $k = \varepsilon_{\max}/n_{\varepsilon}$}
            \STATE Compute $l_\varepsilon^{(j)} =  \sum_{i = 1}^n \mathbf{1}(d_{ij} \ge 2 \sigma^2 \varepsilon) / n - e^\varepsilon \sum_{i = 1}^n \mathbf{1}(d'_{ij} \ge 2 \sigma^2 \varepsilon)/n$. \\
        \ENDFOR
    \ENDFOR
    \STATE Compute $l_\varepsilon = \sum_{j = 1}^m l_\varepsilon^{(j)} / m$ and $\mu_\varepsilon$ via $l_\varepsilon = \delta_\mu(\varepsilon)$ for each $\varepsilon$. Compute $\mu = \max_\varepsilon \mu_\varepsilon$. \\
    \STATE \textbf{Return}: $\mu$.
  \end{algorithmic}
 \end{algorithm}

\subsection{Sampling from Riemannian Gaussian Distribution}
\label{sec_gauss_sample}

The one crucial step in Algorithm \ref{alg_comput} involves sampling from the two Riemannian Gaussian distributions $N_\m(\eta, \sigma^2)$ and $N_\m(\eta', \sigma^2)$. Since their densities (\ref{gauss_density}) are known up to a constant, a Metropolis-Hasting algorithm would be a natural choice. In this section, we describe a general Metropolis-Hasting algorithm for sampling from a Riemannian Gaussian distribution on an arbitrary homogeneous Riemannian manifold \citep{pennec2019riemannian}. However, there are more efficient sampling algorithms that are tailored to specific manifolds (e.g., \citep{said2017gaussian, Hauberg2018normal}). 

The Metropolis-Hasting algorithm involves sampling a candidate $y$ from a proposal distribution $q(\cdot | x)$. The acceptance probability of accepting $y$ as the new state is $\alpha(x, y) = \min\{1, q(y | x) p_{\eta, \sigma}(y) / [q(x | y)p_{\eta, \sigma}(x)]\}$. A natural choice for the proposal distribution $q(\cdot | x)$ could be an exponential-wrapped Gaussian distribution \citep{galaz2022wrapped, chevallier2022wrapped}. Informally, it's the distribution resulting from "wrapping" a Gaussian distribution on tangent space back to the manifold using the exponential map. Given the current state $x \in \m$, we sample a tangent vector $v \sim N(\mathbf{0}, \sigma^2\mathbf{I})$ on the tangent space $T_x\m$. If $\|v\|$ is less than the injectivity radius,  we then accept the newly proposed state $y = \exp_x(v)$ with probability $\alpha(x, y)$. Please refer to Section 2.5 in \citet{pennec2019riemannian} for the detailed algorithm. 

\subsection{GDP on \texorpdfstring{$\R$}{R} and \texorpdfstring{$S^1$}{S1}}
\label{sec_r&s1}

To evaluate the performance of Algorithm \ref{alg_comput}, we will conduct simulations on Euclidean space $\R$ and unit circle $S^1$ as the relation between $\mu$ and $\sigma$ is established in \citet{dong2019gaussian} and Corollary \ref{thm_gauss_s1}.

\begin{figure}[htbp!]
     \centering
     \begin{subfigure}[b]{0.32\textwidth}
         \centering
         \includegraphics[width=0.75\textwidth]{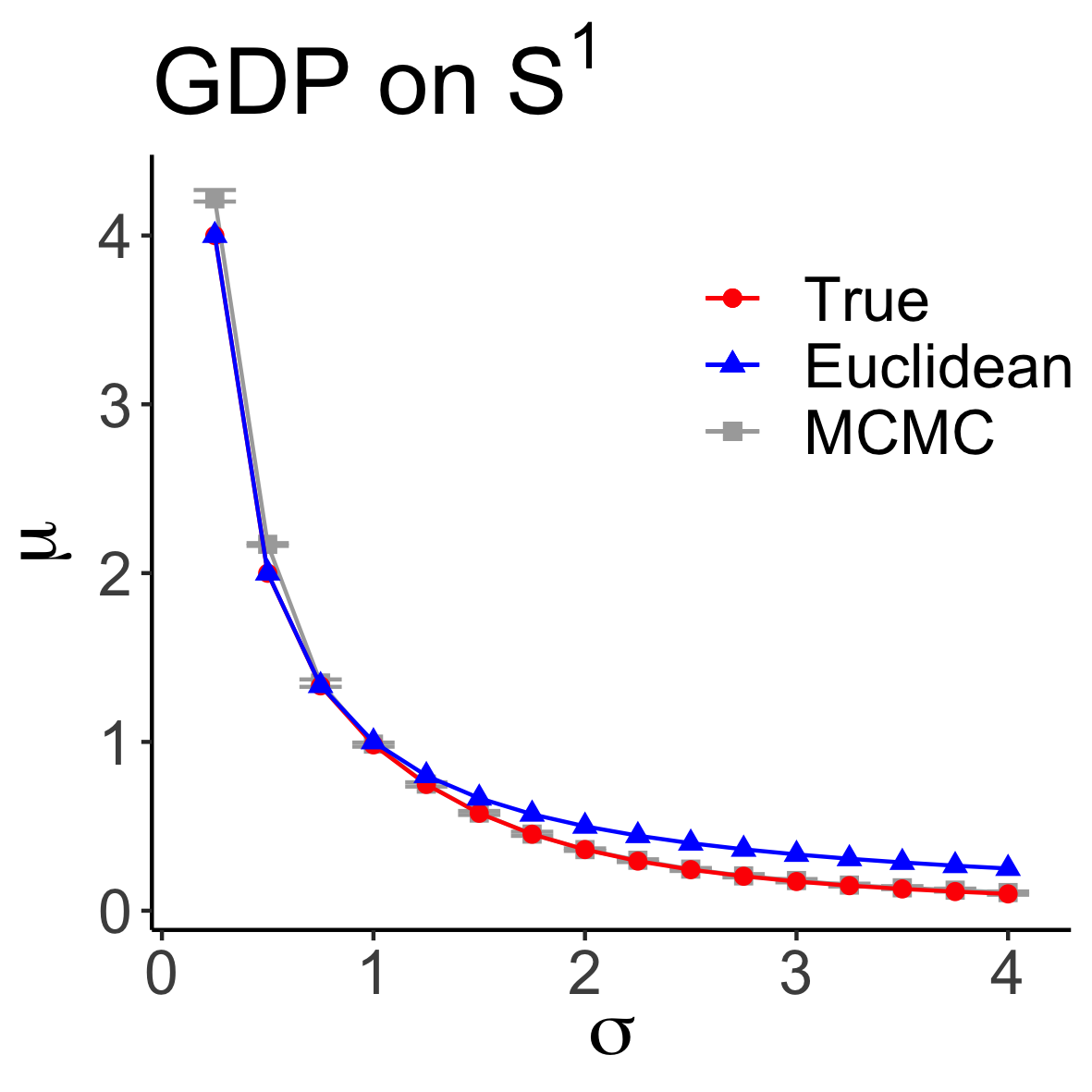}
     \end{subfigure}
     \hfill
     \begin{subfigure}[b]{0.32\textwidth}
         \centering
         \includegraphics[width=0.75\textwidth]{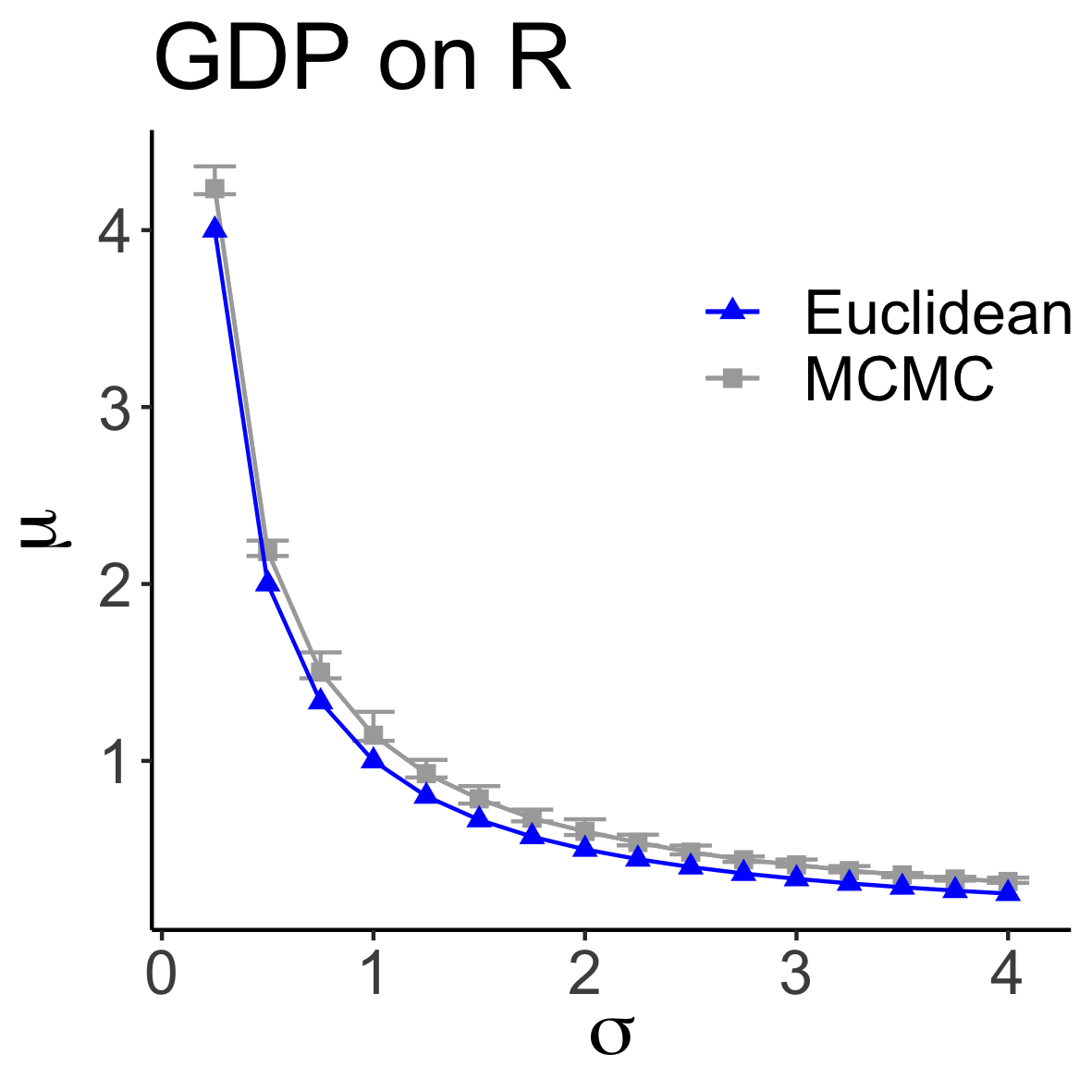}
     \end{subfigure}
     \hfill
     \begin{subfigure}[b]{0.32\textwidth}
         \centering
         \includegraphics[width=0.75\textwidth]{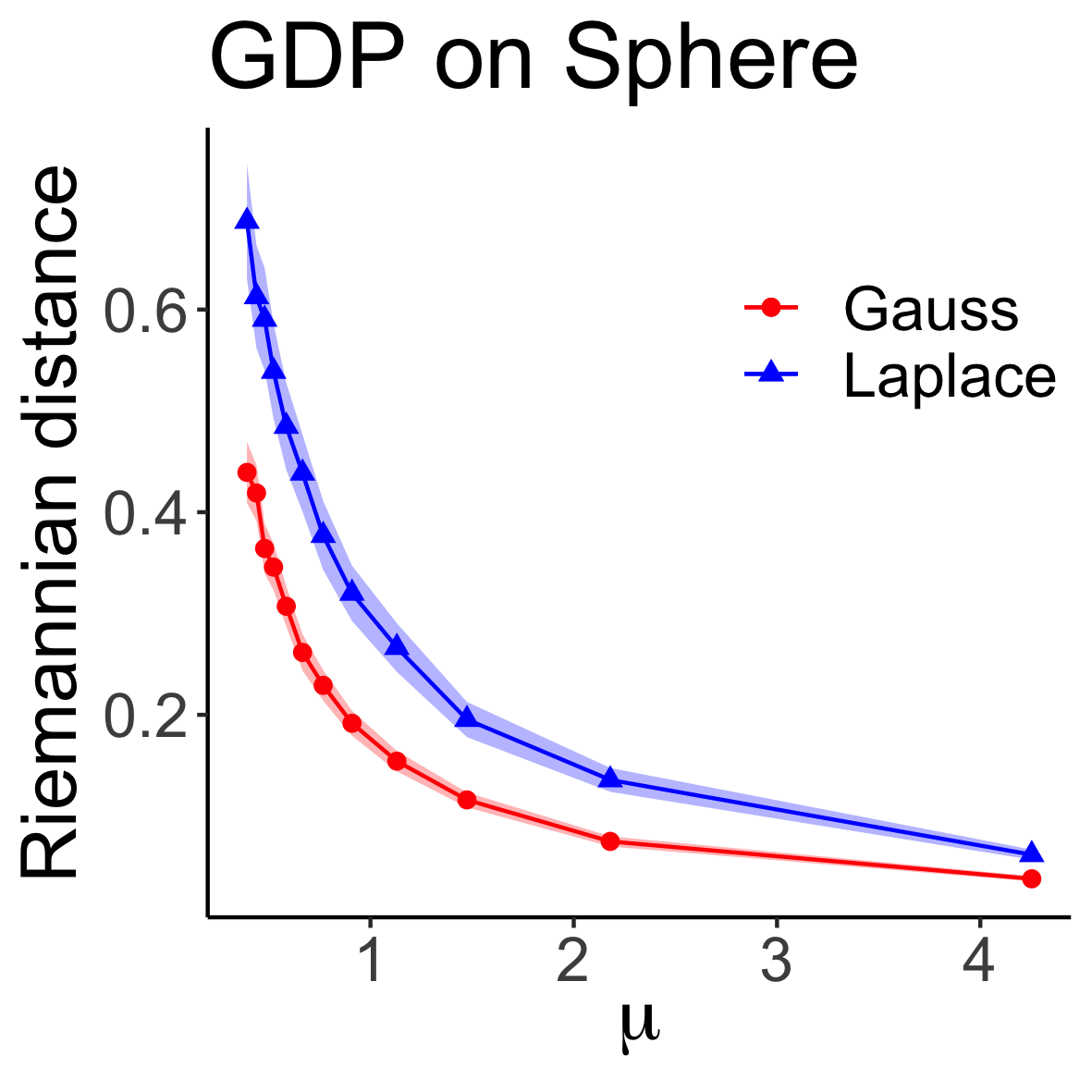}
     \end{subfigure}
        \caption{\textbf{First and Second plots}: Red lines with circular symbols represent the relation between privacy budget $\mu$ and rate $\sigma$ on the unit circle $S^1$. Blue lines with triangular symbols represent the relation in Euclidean space. Gray lines with rectangular symbols plot the sample mean of the $\mu$, across the 20 repeats, computed at a variety of $\sigma$ using Algorithm \ref{alg_comput}. The error bar indicates the minimum and maximum of the $\mu$'s. Refer to Section \ref{sec_r&s1} for details. \textbf{Third plot}: Blue line with triangular symbols indicates the sample mean, across 100 repeats, of the Riemannian distances $d(\bar{x}, \bar{x}_{\text{laplace}})$, while the red line with circular symbols indicates the sample mean of the Riemannian distances $d(\bar{x}, \bar{x}_{\text{gauss}})$. The error bands indicate the sample mean $\pm 4\text{SE}$. Refer to Section \ref{sec_sphere} for details.}
        \label{fig_s1_R}
\end{figure}

We fix sensitivity $\Delta = 1$ and let $\sigma = k/4$ with $1\le k \le 16$. For each $\sigma$, we determine the privacy budget $\mu$ using two approaches: (i) using Algorithm \ref{alg_comput_s1} with $n_{\varepsilon}$ set as $1000$ for $S^1$ and using $\mu = 1 / \sigma$ for $\R$; (ii) using Algorithm \ref{alg_comput} with $n = 1000, n_{\varepsilon} = 1000,  m = 100$, $\varepsilon_{\max} = \pi/(2 \sigma^2)$ for $S^1$ and $\varepsilon_{\max} = \max\{10, 5 / \sigma + 1 / (2\sigma^2)\}$ for $\R$. Since Algorithm \ref{alg_comput} is a randomized algorithm, we generate $20$ replicates for approach (ii) for each $\sigma$. In Figure \ref{fig_s1_R}, the first panel plots the sample means of the $\mu$ generated by approach (ii) (in grey with rectangular symbols) with error bars indicating the minimum \& the maximum of the $\mu$'s and the $\mu$ computed by approach (i) (in red with circular symbols). Additionally, as a comparison, we also plot the $\mu = 1 / \sigma$ for Euclidean space (in blue with triangular symbols) in the first plot. As we see from the first plot, the GDP results on $S^1$ are almost exactly the same as on $\R$ for smaller $\sigma$. This is expected as manifolds are locally Euclidean, and the smaller the $\sigma$ the closer the results will be. As the $\sigma$ gets larger, the privacy budget $\mu$ gets smaller on $S^1$ compared to on $\R$. This can be explained by the fact that $S^1$ is a compact space while $\R$ is not. For the second panel, we plot the exactly same things for Euclidean spaces. As we can observe from both panels, our Algorithm \ref{alg_comput} gives a fairly accurate estimation for larger $\sigma$. However, as $\sigma$ gets smaller, Algorithm \ref{alg_comput} has a tendency to generate estimates that exhibit a higher degree of overestimation. 


\section{Simulations}
\label{sec_sim}

In this section, we evaluate the utility of our Riemannian Gaussian mechanism by focusing on the task of releasing differentially private Fr\'echet mean. Specifically, we conduct numerical examples on the unit sphere, which is commonly encountered in statistics \citep{bhattacharya2012nonparametric, mardia2000directional}. As a comparison to our Riemannian Gaussian mechanism, we use the Riemannian Laplace mechanism implemented in \citet{reimherr2021} to achieve GDP. Although the Riemannian Laplace mechanism is developed originally to achieve $\varepsilon$-DP, it's shown in \citet{liu2022identification} any mechanism that satisfies $\varepsilon$-DP can achieve $\mu$-GDP with $\mu = -2 \Phi^{-1} \left( 1 / (1 + e^\varepsilon)\right) \le \sqrt{\pi / 2} \varepsilon$. Our results show significant improvements in utility for the privatization of the Fr\'echet mean when using our proposed Gaussian mechanism, as opposed to the Laplace mechanism, in both examples. In Section \ref{sec_frechet_mean}, we cover some basics on differentially private Fr\'echet mean. In Section \ref{sec_sphere}, we discuss the numerical results on the sphere. Simulations are done in R on a Mac Mini computer with an Apple M1 processor with 8 GB of RAM running MacOS 13. For more details on each simulation, please refer to Appendix \ref{app_sim}. The R code is available in the GitHub repository: \url{https://github.com/Lei-Ding07/Gaussian-Differential-Privacy-on-Riemannian-Manifolds}

\subsection{Differentially Private Fr\'echet Mean}\label{sec_frechet_mean}

For more details on Fr\'echet mean under the DP setting, please refer to \citet{reimherr2021}. Consider a set of data $x_1, \dots, x_N$ on $\m$. The Euclidean sample mean can be generalized to Riemannian manifolds as the sample Fr\'echet mean, which is the minimizer of the sum-of-squared distances to the data, $\bar{x}=\arg \min _{x \in \m} \sum_{i=1}^N d\left(x, x_i\right)^2$. To ensure the existence \& uniqueness of the Fr\'echet mean and to determine its sensitivity, we need the following assumption.

\begin{assumption}\label{assump}
    The data $\DD \subseteq B_r\left(m_0\right)$ for some $m_0 \in \m$, where $r<r^*$ with 
    $r^* = \min \left\{\operatorname{inj} \mathcal{M}, \pi/(2\sqrt{\kappa})\right\}/2$ for $\kappa > 0$ and $r^* = \operatorname{inj} \mathcal{M} / 2$ for $\kappa \le 0$. Note $\kappa$ denotes an upper bound on the sectional curvatures of $\mathcal{M}$.
\end{assumption}

Under Assumption \ref{assump}, we can then compute the sensitivity of the Fr\'echet mean \citep{reimherr2021}. consider two datasets $\DD \simeq \DD'$. If $\bar{x}$ and $\bar{x}^{\prime}$ are the two sample Fr\'echet means of $\DD$ and $\DD^{\prime}$ respectively, then
\[
        d\left(\bar{x}, \bar{x}^{\prime}\right) \leq \frac{2 r(2-h(r, \kappa))}{n h(r, \kappa)}, \quad 
        h(r, \kappa) = \begin{cases}
        2 r \sqrt{\kappa} \cot (\sqrt{\kappa} 2 r) & \kappa>0 \\
        1 & \kappa \leq 0
        \end{cases}
\]


\subsection{Sphere}
\label{sec_sphere}


First, we revisit some background materials on spheres, refer to \citet{bhattacharya2012nonparametric, reimherr2021} for more details. We denote the $d$-dimensional unit sphere as $S^d$ and identify it as a subspace of $\R^{d+1}$ as $S^d = \{p \in \R^{d+1} : \|p\|_2 = 1\}$. Similarly, at each $p \in S^d$, we identify the tangent space $T_pS^d$ as $T_pS^d = \{v \in \R^{d+1}: v^\top p = 0\}$. The geodesics are the great circles, $\gamma_{p, v}(t) = \cos(t)p + \sin(t)v$ with $-\pi < t \le \pi$ where $\gamma_{p, v}$ denotes the geodesic starts at $p$ with unit direction vector $v$. The exponential map $\exp_p: T_pS^d \to S^d$ is given by $\exp_p(0) = p$ and $\exp_p(v)) := \cos(\|v\|)p + \sin(v)v/\|v\|$ for $v \ne 0$. The inverse of the exponential map $\log_p: S^d \setminus \{-p\} \to T_p S^d$ has the expression $\log_p(p) = 0$ and $\log_p(q) = \arccos(p^\top q)[q - (p^\top q)p]\left[1 - (p^\top q)^2\right]^{-1/2}$ for $q \ne p, -q$. It follows that the distance function is given by $d(p, q) = \arccos(p^\top q) \in [0, \pi]$. Therefore, $S^d$ has an injectivity radius of $\pi$. 

We initiate our analysis by generating sample data $\DD = \{x_1, \dots, x_n\}$ from a ball of radius $\pi / 8$ on $S^2$ and subsequently computing the Fréchet mean $\bar{x}$. To disseminate the private Fréchet mean, we implement two methods: (i) We first generate the privatized mean $\bar{x}_{\text{gauss}}$ by drawing from $N_{\m}(\bar{x}, \sigma^2)$ employing the sampling method proposed by \citet{Hauberg2018normal}. The privacy budget $\mu$ is then computed using Algorithm \ref{alg_comput}. (ii) Next, we convert $\mu$-GDP to the equivalent $\varepsilon$-DP using $\varepsilon = \log[1 - \Phi(-u/2))/\Phi(-u/2)]$, and generate the privatized mean $\bar{x}_{\text{laplace}}$ by sampling from the Riemannian Laplace distribution with footprint $\bar{x}$ and rate $\Delta / \varepsilon$ using the sampling method introduced by \citet{kisung2022spherical}.

Throughout these simulations, we fix the sample size at $n = 10$ to maintain a constant sensitivity $\Delta$. With $\Delta$ held constant, we let the rate $\sigma = k / 4$ with $1 \le k \le 12$. The objective here is to discern the difference between the two distances $d(\bar{x}, \bar{x}_{\text{gauss}})$ and $d(\bar{x}, \bar{x}_{\text{laplace}})$ across varying privacy budgets $\mu$. The third plot in Figure \ref{fig_s1_R} displays the sample mean of the Riemannian distances $d(\bar{x}, \bar{x}_{\text{gauss}})$ (in red with circular symbols) and $d(\bar{x}, \bar{x}_{\text{laplace}})$ (in blue with triangular symbols) across $1000$ iterations with the error band indicating the sample mean $\pm 4\text{SE}$. From observing the third plot, we see that our Gaussian mechanism achieves better utility, especially with a smaller privacy budget $\mu$. With larger $\mu$, the gain in utility is less pronounced. One obvious reason is that there are much fewer perturbations with larger $\mu$ for both approaches, so the difference is subtle. The other reason is that Algorithm \ref{alg_comput} has a tendency to overestimate $\mu$ with smaller $\sigma$. Effectively, $\bar{x}_{\text{gauss}}$ satisfies $\mu$-GDP with a smaller $\mu$ compared to $\bar{x}_{\text{laplace}}$.

\section{Conclusions and Future Directions}

In this paper, we extend the notion of GDP over general Riemannian manifolds. Then we showed that GDP can be achieved when using Riemannian Gaussian distribution as the additive noises. Furthermore, we propose a general MCMC-based algorithm to compute the privacy budget $\mu$ on manifolds with constant curvature. Lastly, we show through simulations that our Gaussian mechanism outperforms the Laplace mechanism in achieving $\mu$-GDP on the unit sphere $S^d$.

There are many future research directions. First of all, the framework established in this paper can be used for extending $(\varepsilon, \delta)$-DP to general Riemannian manifolds. There are several points of improvement around Algorithm \ref{alg_comput} as well. Although Algorithm \ref{alg_comput} provides us a general method of computing the privacy budget $\mu$, it lacks an error bound on its estimation. Furthermore, due to the random nature of Algorithm \ref{alg_comput}, a variance estimator of the output is desirable and can better inform the end user. Though we demonstrate the utility of our mechanism through simulation in Section \ref{sec_sim}, it's difficult to obtain a theoretical utility guarantee due to the lack of simple analytical relation between $\mu$ and rate $\sigma$. Furthermore, although Algorithm \ref{alg_comput_s1} requires the manifolds to have constant curvature, it's possible to extend Algorithm \ref{alg_comput_s1} and Theorem \ref{thm_homo} to a slightly more general class of manifolds. Additionally, the Riemannian Gaussian distribution defined in this paper is not the only way of extending Gaussian distribution to Riemannian manifolds as mentioned in Section \ref{sec_gdp}. Potentially, the two other approaches, the wrapped distribution approach, and the heat kernel approach can also be used to achieve GDP on Riemannian manifolds as well. In particular, there are many rich results around heat kernel on Riemannian manifolds (see \citet{grigoryan2009heat} for example). Incorporating the heat kernel in a privacy mechanism presents ample potential for novel and noteworthy discoveries.


\section*{Acknowledgements}
We would like to express our greatest appreciation to Professor Xiaowei Wang of Rutgers University and Professor Xi Chen from the University of Alberta for their invaluable guidance, support, discussion, and expertise throughout the course of this research.  This work was supported by the Canada CIFAR AI Chairs program, the Alberta Machine Intelligence Institute, the Natural Sciences and Engineering Council of Canada (NSERC), the Canada Research Chair program from NSERC, and the Canadian Statistical Sciences Institute. We also thank Qirui Hu from Tsinghua University and all the constructive suggestions and comments from the reviewers.



\bibliographystyle{abbrvnat}
\bibliography{bib.bib}

\newpage

\appendix

\section{Appendix}

\subsection{Theorem \ref{thm_gauss}}
\label{proof_uGDP}
First, we need to introduce the notion of privacy profile \citep{Balle2018amp}:

The \textbf{privacy profile} $\delta_{\mathcal{M}}$ of a mechanism $\mathcal{M}$ is a function associating to each privacy parameter $\alpha=e^{\varepsilon}$ a bound on the $\alpha$-divergence between the results of running the mechanism on two adjacent datasets, i.e. $\delta_{\mathcal{M}}(\varepsilon)=\sup _{x \simeq x^{\prime}} D_{e^{\varepsilon}}\left(\mathcal{M}(x) \| \mathcal{M}\left(x^{\prime}\right)\right.$ where the $\alpha$-divergence $(\alpha \geq 1)$ between two probability measures $\mu, \mu^{\prime}$ is defined as
$$
D_\alpha\left(\mu \| \mu^{\prime}\right)=\sup _E\left(\mu(E)-\alpha \mu^{\prime}(E)\right)=\int_Z\left[\frac{d \mu}{d \mu^{\prime}}(z)-\alpha\right]_{+} d \mu^{\prime}(z)=\sum_{z \in Z}\left[\mu(z)-\alpha \mu^{\prime}(z)\right]_{+},
$$
where $E$ ranges over all measurable subsets of $Z,[\cdot]_{+}=\max \{\cdot, 0\}.$ \footnote{It is known that a mechanism $\mathcal{M}$ is $(\varepsilon, \delta)$-DP if and only if $D_{e^{\varepsilon}}\left(\mathcal{M}(\DD) \| \mathcal{M}\left(\DD^{\prime}\right)\right) \leq \delta$ for every $\DD$ and $\DD^{\prime}$ such that $\DD \simeq \DD^{\prime}$.} Informally speaking, the privacy profile represents the set of all of the privacy parameters under which a mechanism provides differential privacy. Furthermore, the privacy profile can be computed using Theorem 5 of \citet{balle2018a}:
$$
\delta_{\mathcal{M}}(\varepsilon)=\sup _{\DD \simeq \DD^{\prime}}\left(\operatorname{Pr}\left[L_{\mathcal{M}}^{\DD, \DD^{\prime}}>\varepsilon\right]-e^{\varepsilon} \operatorname{Pr}\left[L_{\mathcal{M}}^{\DD^{\prime}, \DD}<-\varepsilon\right]\right)
$$ 
where $L_{\mathcal{M}}^{\DD, \DD^{\prime}}$ is the \textbf{privacy loss} random variable of the mechanism $\mathcal{M}$ on inputs $\DD \simeq \DD^{\prime}$ defined as $L_{\mathcal{M}}^{\DD, \DD^{\prime}}=\log \left(d \mu / d \mu^{\prime}\right)(\mathbf{z})$, where $\mu=\mathcal{M}(\DD), \mu^{\prime}=\mathcal{M}\left(\DD^{\prime}\right)$, and $\mathbf{z} \sim \mu$. It follows that for our Riemannian Gaussian mechanism $\mathcal{M}$, the privacy profile $\delta_{\mathcal{M}}$ can be rewritten as
$$
\delta_{\mathcal{M}} = 
    \sup_{\DD \simeq \DD'}\int_A  p_{\eta_1, \sigma}(y) \; d\nu(y) - e^\varepsilon  \int_A  p_{\eta_2, \sigma}(y) \; d\nu(y)
$$
where $A:= \{y \in \m: p_{\eta_1, \sigma}(y) / p_{\eta_2, \sigma}(y) \ge e^\varepsilon\}$ and $\eta_1 := f(\DD), \eta_2 := f(\DD')$. This is exactly the left-hand side of (\ref{ineq_gdp}).

The following theorem establishes a connection between GDP and privacy profile:

\begin{thm}[Theorem 3.3 of \citet{liu2022identification}]
\label{thm_uGDP}
    Let $\mu_0 :=\sqrt{\lim _{\varepsilon \rightarrow+\infty} \frac{\varepsilon^2}{-2 \log \delta_{\mathcal{A}}(\varepsilon)}}$. A privacy mechanism  $\mathcal{A}$ with the privacy profile $\delta_{\mathcal{A}}(\varepsilon)$ is $\mu$-GDP if and only if $\mu_0<\infty$ and $\mu$ is no smaller than $\mu_0$.
\end{thm}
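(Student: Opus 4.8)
The plan is to first collapse $\mu$-GDP into a single pointwise statement about privacy profiles, and then analyze that statement through the large-$\varepsilon$ asymptotics of the Gaussian reference profile $\delta_\mu$. By Definition \ref{def_gdp_riemann}, a mechanism $\mathcal{A}$ is $\mu$-GDP exactly when it is $(\varepsilon,\delta_\mu(\varepsilon))$-DP for every $\varepsilon\ge 0$, and by the profile characterization of approximate DP recalled above (a mechanism is $(\varepsilon,\delta)$-DP iff $\delta_{\mathcal{A}}(\varepsilon)\le\delta$), this is equivalent to the family of scalar inequalities $\delta_{\mathcal{A}}(\varepsilon)\le\delta_\mu(\varepsilon)$ holding for all $\varepsilon\ge 0$. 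Thus the whole theorem reduces to proving that this pointwise domination is equivalent to $\mu_0<\infty$ and $\mu\ge\mu_0$. I would also record at the outset the two elementary structural facts I will lean on: for each fixed $\varepsilon>0$ the map $\mu\mapsto\delta_\mu(\varepsilon)$ is strictly increasing (weaker privacy means larger $\delta$), so the set of admissible $\mu$ is an up-set $[\mu^\ast,\infty)$; and it is therefore natural to define, for each $\varepsilon$, the matched parameter $\mu_\varepsilon$ as the unique solution of $\delta_{\mu_\varepsilon}(\varepsilon)=\delta_{\mathcal{A}}(\varepsilon)$, so that $\mu^\ast=\sup_{\varepsilon\ge 0}\mu_\varepsilon$ is the exact infimal GDP budget (this is precisely the quantity Algorithms \ref{alg_comput_s1} and \ref{alg_comput} estimate).

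The computational heart of the argument is a self-consistency lemma: applying the functional $\mu_0(\cdot)$ to the Gaussian profile itself returns its own parameter. Concretely, I would expand $\delta_\mu(\varepsilon)=\Phi(-\varepsilon/\mu+\mu/2)-e^{\varepsilon}\Phi(-\varepsilon/\mu-\mu/2)$ using the Mills-ratio estimate $\Phi(-t)\sim \phi(t)/t$. Both terms share the exponential factor $\exp(-\varepsilon^2/(2\mu^2)+\varepsilon/2-\mu^2/8)$, and after subtracting the polynomial prefactors one obtains
\[
-2\log\delta_\mu(\varepsilon)=\frac{\varepsilon^2}{\mu^2}-\varepsilon+O(\log\varepsilon)\qquad(\varepsilon\to\infty),
\]
so that $\varepsilon^2/\bigl(-2\log\delta_\mu(\varepsilon)\bigr)\to\mu^2$ and hence $\mu_0(\delta_\mu)=\mu$. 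The same expansion shows $\mu_\varepsilon^2=\varepsilon^2/(-2\log\delta_{\mathcal{A}}(\varepsilon))\,(1+o(1))$, which identifies the tail limit of the matched parameters with $\mu_0$, namely $\lim_{\varepsilon\to\infty}\mu_\varepsilon=\mu_0$.

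Necessity is then immediate. If $\mathcal{A}$ is $\mu$-GDP, the pointwise domination gives $-2\log\delta_{\mathcal{A}}(\varepsilon)\ge -2\log\delta_\mu(\varepsilon)>0$ for large $\varepsilon$, whence $0<\varepsilon^2/(-2\log\delta_{\mathcal{A}}(\varepsilon))\le \varepsilon^2/(-2\log\delta_\mu(\varepsilon))$; letting $\varepsilon\to\infty$ and invoking the lemma yields $\mu_0^2\le\mu^2<\infty$, i.e. $\mu_0<\infty$ and $\mu\ge\mu_0$. (If instead $\delta_{\mathcal{A}}(\varepsilon)$ fails to vanish as $\varepsilon\to\infty$ — a genuine ``leak'' — then $\mu_0=\infty$ and no finite $\mu$ can dominate, consistent with the claim.) Equivalently, in the $\mu_\varepsilon$ language, $\mu^\ast=\sup_\varepsilon\mu_\varepsilon\ge\lim_{\varepsilon\to\infty}\mu_\varepsilon=\mu_0$, so admissibility forces $\mu\ge\mu_0$.

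The main obstacle is sufficiency: upgrading the purely asymptotic condition $\mu_0\le\mu$ to the global bound $\delta_{\mathcal{A}}(\varepsilon)\le\delta_\mu(\varepsilon)$ at \emph{every} $\varepsilon$, equivalently showing that the supremum $\sup_\varepsilon\mu_\varepsilon$ is attained in the limit $\varepsilon\to\infty$ rather than at some finite $\varepsilon$. Controlling only the tail cannot by itself rule out an intermediate violation, so here I would exploit the convex-analytic structure that genuine privacy profiles inherit from trade-off functions: $\delta_{\mathcal{A}}$ is the conjugate-type transform of the symmetric convex trade-off function $f_{\mathcal{A}}=\inf_{\DD\simeq\DD'}T(\mathcal{A}(\DD),\mathcal{A}(\DD'))$, and the duality between these transforms turns the profile inequalities $\delta_{\mathcal{A}}\le\delta_\mu$ (all $\varepsilon$) into the single trade-off comparison $f_{\mathcal{A}}\ge G_\mu$. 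The plan is to prove that the totally ordered Gaussian family $\{G_\mu\}$ is governed at the corner $\alpha\to 0^+$: because $\mu_0$ measures exactly the Gaussian rate at which $f_{\mathcal{A}}(\alpha)\to 1$, the condition $\mu>\mu_0$ makes $G_\mu$ approach the corner strictly more slowly, and convexity together with the symmetry $f_{\mathcal{A}}=f_{\mathcal{A}}^{-1}$ should then force $f_{\mathcal{A}}\ge G_\mu$ on all of $[0,1]$; the boundary case $\mu=\mu_0$ follows by taking $\mu\downarrow\mu_0$ and using continuity/closedness of the GDP constraint. I expect this tangency-plus-convexity step to be where the real work lies and where the regularity ensuring existence of the limit defining $\mu_0$ must be justified; the preceding reduction and the Mills-ratio lemma are routine by comparison.
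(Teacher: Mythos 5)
First, a point of comparison: the paper never proves this statement at all. It is imported verbatim as Theorem 3.3 of \citet{liu2022identification} and used as a black box in Appendix A.1 to deduce Theorem \ref{thm_gauss}, so there is no in-paper proof to measure your attempt against; it must stand on its own. The parts of your argument that are worked out are sound: the reduction of $\mu$-GDP to the pointwise domination $\delta_{\mathcal{A}}(\varepsilon)\le\delta_\mu(\varepsilon)$ for all $\varepsilon\ge 0$ matches the paper's Definition \ref{def_gdp_riemann} and the hockey-stick characterization recalled in Appendix A.1; the Mills-ratio expansion $-2\log\delta_\mu(\varepsilon)=\varepsilon^2/\mu^2-\varepsilon+O(\log\varepsilon)$ is correct, giving the self-consistency $\mu_0(\delta_\mu)=\mu$; and the necessity direction follows as you say. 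Two caveats even there: your argument only bounds $\limsup_{\varepsilon\to\infty}\varepsilon^2/(-2\log\delta_{\mathcal{A}}(\varepsilon))$ by $\mu^2$, whereas the statement's $\mu_0$ presumes the limit exists (you flag this but never resolve it), and your parenthetical case $\delta_{\mathcal{A}}(\varepsilon)\not\to 0$ is not the only degenerate one — profiles that hit $0$ at finite $\varepsilon$ (pure-DP mechanisms) make the ratio tend to $0$ and are also outside your analysis.

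The genuine gap is sufficiency, which you explicitly leave as a plan, and the plan as described would fail. Your hoped-for lemma — that matching the Gaussian corner rate at $\alpha\to 0^+$ (equivalently, the profile tail as $\varepsilon\to\infty$), combined with convexity and the symmetry $f_{\mathcal{A}}=f_{\mathcal{A}}^{-1}$, forces $f_{\mathcal{A}}\ge G_\mu$ globally — is false for general trade-off functions: tail information does not propagate to finite $\varepsilon$. Concretely, let a mechanism flip a fair public coin and run either randomized response with large parameter $\varepsilon_0$ or a Gaussian mechanism with budget $\mu$; because the coin is public, the profile is the average, $\delta_{\mathcal{A}}(\varepsilon)=\tfrac12\delta_{RR}(\varepsilon)+\tfrac12\delta_\mu(\varepsilon)$ with $\delta_{RR}(\varepsilon)=\max\bigl(0,(e^{\varepsilon_0}-e^{\varepsilon})/(e^{\varepsilon_0}+1)\bigr)$. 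Since $\delta_{RR}$ vanishes for $\varepsilon\ge\varepsilon_0$, the tail gives $\mu_0=\mu$ exactly, and the limit exists; yet $\delta_{\mathcal{A}}(0)\approx\tfrac12+\tfrac12\bigl(2\Phi(\mu/2)-1\bigr)$ exceeds $\delta_{\mu'}(0)=2\Phi(\mu'/2)-1$ for every $\mu'$ in a sizable interval above $\mu_0$ when $\mu$ is small (e.g.\ $\mu=0.1$ rules out all $\mu'\lesssim 1.35$), so the mechanism is not $\mu'$-GDP for those $\mu'$. This is a legitimate symmetric convex trade-off function, so no tangency-plus-convexity argument of the kind you sketch can close the step: the ``if'' direction is precisely where the theorem's content lies, it cannot be derived from the tail asymptotics alone, and any correct proof must invoke hypotheses or machinery from \citet{liu2022identification} that your sketch does not reproduce. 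As it stands, your proposal proves only the necessity half.
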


Theorem \ref{thm_uGDP} implies that a mechanism with finite $\mu_0$ is $\mu$-GDP for some privacy budget $\mu$. Note that Theorem \ref{thm_gauss} only tells us that Riemannian Gaussian distribution can be used to achieve $\mu$-GDP for some $\mu$. Therefore, to prove it, we only need to show that $\mu_0$ is finite. We will show $\mu_0 < \infty$ by demonstrate that $- \log \delta_{\mathcal{A}}(\varepsilon) = O(\varepsilon^2)$. To do so, we will need the Bishop-Gromov comparison theorem:

\begin{thm}[Bishop-Gromov; Lemma 7.1.4 in \citet{petersen2006riemannian}]\label{thm_bishop}
    Let $M$ be a complete $n$-dimensional Riemannian manifold whose Ricci curvature satisfies the lower bound
    \[
        \operatorname{Ric} \geq(n-1) K
    \]
    for a constant $K \in \mathbb{R}$. Let $M_K^n$ be the complete $n$-dimensional simply connected space of constant sectional curvature $K$ (and hence of constant Ricci curvature $(n-1) K$ ). Denote by $B(p, r)$ the ball of radius $r$ around a point $p$, defined with respect to the Riemannian distance function.
    Then, for any $p \in M$ and $p_K \in M_K^n$, the function
    \[
        \phi(r)=\frac{\operatorname{Vol} B(p, r)}{\operatorname{Vol} B\left(p_K, r\right)}
    \]
    is non-increasing on $(0, \infty)$.
    As $r$ goes to zero, the ratio approaches one, so together with the monotonicity this implies that
    \[
        \operatorname{Vol} B(p, r) \leq \operatorname{Vol} B\left(p_K, r\right).
    \]
\end{thm}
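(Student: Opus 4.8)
The plan is to prove the volume comparison through its infinitesimal (area-density) version in geodesic polar coordinates, followed by a ratio-monotonicity argument. Fix $p \in M$ and work in geodesic polar coordinates centered at $p$. For a unit vector $\theta \in S^{n-1} \subset T_pM$, let $\mathcal{A}(t,\theta)$ denote the Jacobian of the exponential map along the radial geodesic $\gamma_\theta(t) = \exp_p(t\theta)$, so that $\sqrt{\det g} = \mathcal{A}(t,\theta)$ in these coordinates and
\[
\operatorname{Vol} B(p,r) = \int_{S^{n-1}} \int_0^{\min(r,\, c(\theta))} \mathcal{A}(t,\theta)\, dt\, d\theta,
\]
where $c(\theta)$ is the distance from $p$ to the cut point in direction $\theta$. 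I extend $\mathcal{A}(t,\theta)$ by zero for $t \geq c(\theta)$; since the cut locus has measure zero this does not change the integral.

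The first key step is to control the logarithmic radial derivative $m(t,\theta) := \partial_t \log \mathcal{A}(t,\theta)$, which equals the mean curvature (the trace of the shape operator $S$) of the geodesic sphere at $\gamma_\theta(t)$. The shape operator obeys the matrix Riccati equation $S' + S^2 + R = 0$, where $R$ is the radial curvature operator $v \mapsto R(v,\dot\gamma)\dot\gamma$. Taking traces and applying Cauchy--Schwarz gives $\operatorname{tr}(S^2) \geq (\operatorname{tr} S)^2/(n-1) = m^2/(n-1)$, and combined with the hypothesis $\operatorname{Ric}(\dot\gamma,\dot\gamma) \geq (n-1)K$ this yields the scalar Riccati inequality
\[
m'(t) + \frac{m(t)^2}{n-1} \leq -(n-1)K.
\]
In the model space $M_K^n$ the density is $\mathcal{A}_K(t) = \mathrm{sn}_K(t)^{\,n-1}$, where $\mathrm{sn}_K$ solves $\mathrm{sn}_K'' + K\,\mathrm{sn}_K = 0$ with $\mathrm{sn}_K(0)=0$, $\mathrm{sn}_K'(0)=1$, and its log-derivative $m_K(t)$ satisfies the above with equality. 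Both $m$ and $m_K$ share the singular asymptotics $m,\, m_K \sim (n-1)/t$ as $t \to 0^+$, so a Sturm/Riccati comparison gives $m(t,\theta) \leq m_K(t)$ for every $t \in (0, c(\theta))$. Equivalently $\partial_t \log\bigl(\mathcal{A}(t,\theta)/\mathcal{A}_K(t)\bigr) \leq 0$, i.e. $t \mapsto \mathcal{A}(t,\theta)/\mathcal{A}_K(t)$ is non-increasing; after extension by zero this monotonicity persists past the cut point.

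From this pointwise angular monotonicity I pass to the volume ratio. Integrating over $\theta$, the geodesic-sphere areas $a(t) = \int_{S^{n-1}} \mathcal{A}(t,\theta)\,d\theta$ and $a_K(t) = \operatorname{Vol}(S^{n-1})\,\mathcal{A}_K(t)$ satisfy that $a(t)/a_K(t)$ is non-increasing. The final ingredient is the elementary ratio-monotonicity lemma: if $a/a_K$ is non-increasing, then so is $r \mapsto \int_0^r a(t)\,dt \big/ \int_0^r a_K(t)\,dt$, which follows by differentiating the quotient and using $a(r)/a_K(r) \leq a(t)/a_K(t)$ for $t \leq r$. Since these integrals equal $\operatorname{Vol} B(p,r)$ and $\operatorname{Vol} B(p_K,r)$, this shows $\phi(r)$ is non-increasing. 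Finally, the local Euclidean behaviour of the metric forces $\mathcal{A}(t,\theta),\, \mathcal{A}_K(t) \sim t^{n-1}$ as $t \to 0$, so $\phi(r) \to 1$; combined with monotonicity this gives $\phi(r) \leq 1$, i.e. $\operatorname{Vol} B(p,r) \leq \operatorname{Vol} B(p_K,r)$.

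The main obstacle is the treatment of the cut locus: the Riccati comparison and the positivity of $\mathcal{A}$ hold only on $(0,c(\theta))$, so care is needed to justify the extension by zero, to verify that the pointwise ratio remains non-increasing across $t = c(\theta)$ (it drops to $0$ and stays there, which is consistent with monotonicity), and to confirm that the cut locus contributes no volume. A secondary technical point is the singular initial condition at $t=0$ in the Riccati comparison, which I handle by comparing on $[\epsilon, t]$ and letting $\epsilon \to 0^+$ using the common asymptotic $(n-1)/t$.
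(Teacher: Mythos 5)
Your proposal is correct, and it is essentially the proof of the cited source: the paper itself does not prove this theorem but imports it verbatim from Petersen (Lemma 7.1.4), and your argument — the traced matrix Riccati equation with Cauchy--Schwarz giving $m' + m^2/(n-1) \le -(n-1)K$, the Sturm/Riccati comparison with the singular asymptotics $(n-1)/t$ at $t=0^+$, pointwise monotonicity of $\mathcal{A}(t,\theta)/\mathcal{A}_K(t)$ extended by zero past the cut locus, Gromov's integral-ratio lemma, and the limit $\phi(r)\to 1$ — is exactly the standard treatment found there. Your explicit handling of the two delicate points (the cut locus and the singular initial condition, with the implicit consequence $c(\theta)\le \pi/\sqrt{K}$ when $K>0$ so that the comparison is meaningful wherever $m_K$ is defined) matches the reference, so there is nothing to flag.
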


Bishop-Gromov comparison theorem not only gives us the control of volume growth of certain manifolds but also gives a rough classification by sectional curvature. Besides, this is a global property in the sense that $p$ and $p_K$ can be arbitrary points on the manifolds.  

\subsubsection{Proof of Theorem \ref{thm_gauss}}
\label{proof:thm-gauss}
\begin{proof}
By \ref{thm_uGDP}, we only need to show that for any $\eta \in \m$, when $\varepsilon \rightarrow \infty$, 
$$
\int_{A} p_{\eta, \sigma}(y)d\nu(y)=e^{-O(\varepsilon^2)}
$$
where $A$ is given by 
$$
A=\{y\in \m|p_{\eta,\sigma}(y)/p_{\eta',\sigma}(y) > e^{\varepsilon} \}
$$
Let's consider $\m\backslash A=\{y\in \m : p_{\eta,\sigma}(y)/p_{\eta',\sigma}(y) \le e^{\varepsilon} \}$. We have
\begin{align*}
    & \log \left(\frac{p_{\eta,\sigma}(y)}{p_{\eta',\sigma}(y)} \right) \\
    = & \frac{1}{2\sigma^2}(d(\eta,y)^2-d(\eta',y)^2) + C\\
    \le & \frac{\Delta}{2\sigma^2}(2d(\eta,y)+\Delta) + C, \quad \text{by triangular inequality,}
\end{align*}
where $C = \log(Z(\eta, \sigma)) - \log(Z(\eta', \sigma))$. Thus we have, 
$$
d(\eta,y) \le \frac{2\sigma^2 (\varepsilon - C)-\Delta^2}{2\Delta} \implies \frac{p_{\eta,\sigma}(y)}{p_{\eta',\sigma}(y)} \le e^{\varepsilon}
$$
Let $r=\frac{2\sigma^2 (\varepsilon - C)-\Delta^2}{2\Delta}$, note that since $B_\eta(r) \subseteq \m \backslash A$, we have $A \subseteq \m \backslash B_\eta(r)$. Thus, we only need to prove the following:
$$
\int_{\m \backslash B_\eta(r)} p_{\eta, \sigma^2}(y)d\nu(y)=e^{-O(\varepsilon^2)}
$$
when $\varepsilon \to \infty$. One can easily show the following inequality, 
\begin{equation}\label{eq:Appendix1}
\int_{ B_\eta(2r)\backslash B_\eta(r)} p_{\eta, \sigma}(y)d\nu(y) \le Z(\eta, \sigma) \; e^{-\frac{r^2}{\sigma^2}}(\operatorname{Vol} B(\eta, 2r)-
\operatorname{Vol} B(\eta, r))
\end{equation}
By Theorem \ref{thm_bishop}, we have the following three cases:
\begin{enumerate}

    \item $K>0$. Then the standard space $M^n_K$ is the $n$-sphere of radius $1/\sqrt{K}$. $(\operatorname{Vol} B(\eta, 2r)- \operatorname{Vol} B(\eta, r))$ is obviously less than the volume of the whole space $M^n_K$. Thus we have 
    \begin{equation} \label{eq:Appendix2}
        \int_{ B_\eta(2r)\backslash B_\eta(r)} p_{\eta, \sigma}(y)d\nu(y) \le Z(\eta, \sigma) \; e^{-\frac{r^2}{\sigma^2}}s(n)\sqrt{K}^{1-n}
    \end{equation}
    where $s(n)=\frac{2\pi^{\frac{n}{2}}}{\Gamma(\frac{n}{2})}$ is a constant relative to the dimension $n$. One can easily find that as $\varepsilon \rightarrow \infty$, $B_\eta(2r)$ will cover the $M^n_K$, and the right-hand side of inequality \ref{eq:Appendix2} will approach to $0$ as $e^{-O(\varepsilon^2)}$.

    \item $K=0$. Then the standard space $M^n_K$ is the $n$ dimensional Euclidean space $\R^n$. We have
\begin{equation}\label{eq:Appendix3}
     \int_{ B_\eta(2r)\backslash B_\eta(r)} p_{\eta, \sigma}(y)d\nu(y) \le Z(\eta, \sigma) \; e^{-\frac{r^2}{\sigma^2}}\frac{\pi^{n/2}r^n}{\Gamma(\frac{n}{2}+1)}.
\end{equation}
The same with above, when $\varepsilon \rightarrow \infty$, $B_\eta(2r)$ will cover the $\R^n$ and the right-hand side of \ref{eq:Appendix3} will approach to $0$ as $e^{-O(\varepsilon^2)}$.

\item $K<0$. The standard space $M_n(K)$ is the hyperbolic $n$-space $\mathbb{H}^n$. The hyperbolic volume $\operatorname{Vol} B(\eta_K, r)$ with any $\eta_k \in \mathbb{H}^n$ is given by  
\begin{equation} \label{eq:Appendix4}
    \operatorname{Vol} B(\eta_K, r)=s(n)\int^r_0  \left( \frac{\operatorname{sinh}(\sqrt{-K}t)}{\sqrt{-K}} \right)^{n-1}dt
\end{equation}
where the hyperbolic function is given by $\operatorname{sinh}(x)= (e^x-e^{-x})/2$. It's not hard to see that 
\begin{equation} \label{eq:Appendix5}
    \operatorname{sinh^n(t)} \le \frac{(n+1)e^{nt}}{2^n}.
\end{equation}
Plugging the \ref{eq:Appendix5} into \ref{eq:Appendix4}, we have
\begin{equation} \label{eq:Appendix6}
    \operatorname{Vol} B(\eta_K,r) \le s_n\frac{ n}{(n-1)2^{n-1}\sqrt{-K}^n}e^{\sqrt{-K}(n-1)r}. 
\end{equation}
Combining \ref{eq:Appendix6} and \ref{eq:Appendix1}, we have
\begin{equation}\label{eq:Appendix7}
    \int_{ B_\eta(2r)\backslash B_\eta(r)} p_{\eta, \sigma}
    (y)d\nu(y) \le c(n) \; Z(\eta, \sigma) \; e^{-\frac{r^2}{\sigma^2}+\sqrt{-K}(n-1)2r}. 
\end{equation}
The principle part of the exponent of the right-hand side of \ref{eq:Appendix7} is still $-\varepsilon^2$. Thus, when $\varepsilon \rightarrow \infty$, it approaches to $0$ as $e^{-O(\varepsilon^2)}$.

\end{enumerate}
\end{proof}

\subsection{Theorem \ref{thm_gauss_2}}
\label{proof_gauss_2}

\begin{proof}
    Follows directly from Definition \ref{def_gdp}, Theorem \ref{thm_gauss} and Theorem 5 in \citet{balle2018a}.
\end{proof}

\subsection{Corollary \ref{thm_gauss_s1}}
\label{proof_gauss_s1}

\begin{proof}
    
    
    In this proof, we will parameterize points on $S^1$ using their polar angles.  

    On $S^1$, the Riemannian Gaussian distribution with footprint $\eta$ and rate $\sigma$ has the following density,
    \[
    p_{\eta, \sigma} (\theta) = \frac1{Z_{\sigma}} e^{-\frac1{2\sigma^2} (\theta - \eta \mod \pi)^2}, \quad Z_{\sigma} =  \sqrt{2\pi} \sigma \left[ \Phi\left( \frac{\pi}{\sigma} \right) - \Phi\left( -\frac{\pi}{\sigma}\right) \right].
    \]
    Note that since $S^1$ has constant curvature, we can use Theorem \ref{thm_homo} instead of Theorem \ref{thm_gauss_2}

    WLOG we assume $\eta_1 = 2\pi - \frac{\Delta}{2}$ and $\eta_2 = \frac{\Delta}{2}$ and thus $d(\eta_1, \eta_2) = \Delta$. Given an arbitrary $\varepsilon$, the set $A$ takes the following form,
    \[
    A = \left[ \pi + \frac{\sigma^2\varepsilon}{\Delta}, 2\pi - \frac{\sigma^2\varepsilon}{\Delta} \right].
    \]
    and it follows that we must have
    \begin{equation}\label{eps_range}
    \varepsilon \in [0, \pi \Delta/(2\sigma^2)]. 
    \end{equation}

    Thus we have
    \begin{align*}
        & \int_A  p_{\eta_1, \sigma}(y) \; d\nu(y) - e^\varepsilon  \int_A  p_{\eta_2, \sigma}(y) \; d\nu(y) \\
        =& \frac1{Z_\sigma} \left[ \int_A e^{-\frac1{2\sigma^2} (\eta_1 - \theta \mod \pi)^2} d\theta - e^\varepsilon\int_A e^{-\frac1{2\sigma^2} (\eta_2 - \theta \mod \pi)^2} d\theta  \right] \\
        =& \frac1{Z_\sigma} \left[ \int_{\pi + \frac{\sigma^2\varepsilon}{\Delta}}^{2\pi - \frac{\sigma^2\varepsilon}{\Delta}} e^{-\frac1{2\sigma^2} (\eta_1 - \theta \mod \pi)^2} d\theta - e^\varepsilon\int_A e^{-\frac1{2\sigma^2} (\eta_2 - \theta \mod \pi)^2} d\theta  \right] \\
        =& \frac1{Z_\sigma} \left[ \int_{\pi + \frac{\sigma^2\varepsilon}{\Delta}}^{2\pi - \frac{\sigma^2\varepsilon}{\Delta}} e^{-\frac1{2\sigma^2} (\eta_1 - \theta)^2} d\theta - e^\varepsilon\int_A e^{-\frac1{2\sigma^2} (\eta_2 - \theta \mod \pi)^2} d\theta  \right] \\
        =& \frac1{Z_\sigma} \left[ \Phi \left( \frac{2\pi}{\sigma} - \frac{\sigma\varepsilon}{\Delta} - \frac{\eta_1}{\sigma} \right) - \Phi \left( \frac{\pi}{\sigma} + \frac{\sigma\varepsilon}{\Delta} - \frac{\eta_1}{\sigma} \right) - e^\varepsilon\int_A e^{-\frac1{2\sigma^2} (\eta_2 - \theta \mod \pi)^2} d\theta  \right] \\
        =& \frac1{Z_\sigma} \left[ \Phi \left(- \frac{\sigma \varepsilon}{\Delta} + \frac{\Delta}{2\sigma} \right) - \Phi \left( \frac{\sigma \varepsilon}{\Delta} + \frac{\Delta}{2\sigma} - \frac{\pi}{\sigma} \right) - e^\varepsilon\int_A e^{-\frac1{2\sigma^2} (\eta_2 - \theta \mod \pi)^2} d\theta  \right]. 
    \end{align*}
    We have evaluated the first integral, and now let's consider the second integral. For $\varepsilon \le \frac{\Delta^2}{2 \sigma^2}$, we have
    \begin{align*}
        & \int_A e^{-\frac1{2\sigma^2} (\mu_2 - \theta \mod \pi)^2} d\theta \\
        =& \int_{\pi + \frac{\sigma^2\varepsilon}{\Delta}}^{\mu_2 + \pi} e^{-\frac1{2\sigma^2}(\theta - \mu_2)^2} d\theta + \int_{\mu_2 + \pi}^{2\pi - \frac{\sigma^2\varepsilon}{\Delta}} e^{-\frac1{2\sigma^2}(\theta - (2\pi + \mu_2))^2} d\theta \\
        =& \sqrt{2\pi}\sigma \left[ \Phi\left( \frac{\pi}{\sigma}\right) - \Phi\left( \frac{\pi}{\sigma} + \frac{\sigma\varepsilon}{\Delta} - \frac{\Delta}{2\sigma}\right) + \Phi\left( - \frac{\sigma\varepsilon}{\Delta} - \frac{\Delta}{2\sigma}\right) - \Phi\left( -\frac{\pi}{\sigma} \right) \right].
    \end{align*}
    Thus for $\varepsilon \le \frac{\Delta^2}{2 \sigma^2}$ we have,
    \begin{align} \label{case1}
        & \int_A  p_{\eta_1, \sigma}(y) \; d\nu(y) - e^\varepsilon  \int_A  p_{\eta_2, \sigma}(y) \; d\nu(y) \nonumber \\
        =& \frac{\sqrt{2\pi}\sigma}{Z_{\sigma}} \left[ \Phi \left(- \frac{\sigma \varepsilon}{\Delta} + \frac{\Delta}{2\sigma} \right) - e^\varepsilon \Phi \left(- \frac{\sigma \varepsilon}{\Delta} - \frac{\Delta}{2\sigma} \right)\right] \\
        &- \frac{\sqrt{2\pi}\sigma}{Z_{\sigma}} \left[ \Phi \left( \frac{\sigma \varepsilon}{\Delta} + \frac{\Delta}{2\sigma} - \frac{\pi}{\sigma} \right) - e^\varepsilon \Phi \left(\frac{\sigma \varepsilon}{\Delta} - \frac{\Delta}{2\sigma} + \frac{\pi}{\sigma} \right) \right] \nonumber \\
        &- e^\varepsilon \nonumber.
    \end{align}
    Similarly, for $\varepsilon > \frac{\Delta^2}{2 \sigma^2}$, we have, 
        \begin{align*}
        & \int_A e^{-\frac1{2\sigma^2} (\mu_2 - \theta \mod \pi)^2} d\theta \\
        =& \int_{\pi + \frac{\sigma^2\varepsilon}{\Delta}}^{2\pi - \frac{\sigma^2\varepsilon}{\Delta}} e^{-\frac1{2\sigma^2}(\theta - (2\pi + \mu_2))^2} d\theta \\
        =& \sqrt{2\pi}\sigma \left[ \Phi\left( - \frac{\sigma\varepsilon}{\Delta} - \frac{\Delta}{2\sigma}\right) - \Phi\left( -\frac{\pi}{\sigma} +  \frac{\sigma\varepsilon}{\Delta} - \frac{\Delta}{2\sigma}\right) \right].
    \end{align*}
        Thus for $\varepsilon > \frac{\Delta^2}{2 \sigma^2}$ we have,
    \begin{align} \label{case2}
        & \int_A  p_{\eta_1, \sigma}(y) \; d\nu(y) - e^\varepsilon  \int_A  p_{\eta_2, \sigma}(y) \; d\nu(y) \nonumber \\
        =& \frac{\sqrt{2\pi}\sigma}{Z_{\sigma}} \left[ \Phi \left(- \frac{\sigma \varepsilon}{\Delta} + \frac{\Delta}{2\sigma} \right) - e^\varepsilon \Phi \left(- \frac{\sigma \varepsilon}{\Delta} - \frac{\Delta}{2\sigma} \right)\right] \\
        &- \frac{\sqrt{2\pi}\sigma}{Z_{\sigma}} \left[ \Phi \left( \frac{\sigma \varepsilon}{\Delta} + \frac{\Delta}{2\sigma} - \frac{\pi}{\sigma} \right) - e^\varepsilon \Phi \left(\frac{\sigma \varepsilon}{\Delta} - \frac{\Delta}{2\sigma} - \frac{\pi}{\sigma} \right) \right] \nonumber.
    \end{align}

    Put (\ref{eps_range}), (\ref{case1}) and (\ref{case2}) together with Theorem \ref{thm_homo}, we have proved Corollary \ref{thm_gauss_s1}. 
\end{proof}

\subsection{Homogeneous Riemannian Manifolds}
\label{app_homo}

For more detailed treatment on homogenous Riemannian manifolds and related concepts, refers to \citet{helgason1962differential, berestovskii2020riemannian, lee2006riemannian} for details and \citet{chakraborty2019stiefel} for a more concise summary. 
\subsubsection{Group actions on Manifolds}
In this section, we will introduce some basic facts about group action which will be used to introduce homogeneous Riemannian manifolds in later section. The materials covered in this section can be found in any standard Abstract Algebra texts. 

\begin{definition}
    A \textbf{group} $(G, \cdot)$ is a non-empty set $G$ together with a binary operation $\cdot: G \times G \to G, (a, b) \mapsto a \cdot b$ such that the following three axioms are satisfied:
    \begin{itemize}
        \item \textbf{Associativity}: $\forall a,b,c \in G, (a\cdot b)\cdot c=a(b\cdot c)$

        \item \textbf{Identity element}: $\exists e \in G, \forall a \in G, a\cdot e=e\cdot a=a$.
        \item \textbf{Inverse element}: $\forall a\in G, \exists a^{-1}\in G, a\cdot a^{-1}=a^{-1}\cdot a=e$. 
    \end{itemize}
\end{definition}

\begin{definition}
        Let $G$ be a group and $X$ be an arbitrary set. A \textbf{left group action} is a map $\alpha:\ G \times X \rightarrow X$, that satisfies the following axioms:
        \begin{itemize}
            \item $\alpha(e,x)=x$
            \item $\alpha(g,\alpha(h,x))=\alpha(gh,x)$
        \end{itemize}
        Note here we use the juxtaposition $gh$ to denote the binary operation in the group.  If we shorten $\alpha(g,x)$ by $g \cdot x$, it's equivalent to say that $e\cdot x=x$, and $g\cdot (h \cdot x)=(gh)\cdot x$
\end{definition}
Note each $g \in G$ induces a map $L_g: X \to X, x \mapsto g\cdot x$.  

\subsubsection{Homogeneous Riemannian manifolds, symmetric spaces and spaces of constant curvature}


Let $\m$ be a Riemannian manifold and $I(\m)$ be the set of all isometries of $\m$, that is, given $g \in I(\mathcal{M}), d(g \cdot x, g \cdot y)=d(x, y)$, for all $x, y \in \m$. It is clear that $I(\m)$ forms a group, and thus, for a given $g \in I(\m)$ and $x \in \m, g \cdot x \mapsto y$, for some $y \in \m$ is a group action. We call $I(\m)$ the isometry group of $\m$.

Consider $o \in \mathcal{M}$, and let $H=\operatorname{Stab}(o)=\{h \in G \mid h \cdot o=o\}$, that is, $H$ is the \textbf{Stabilizer} of $o \in \mathcal{M}$. Given $g \in I(M)$, its linear representation $g \mapsto d_xg$ in the tangent space $T_x \m$ is called the \textbf{isotropy representation} and the linear group $d_x \operatorname{Stab}(x)$ is called the \textbf{isotropy group} at the point $x$. 

We say that $G$ acts transitively on $\mathcal{M}$, iff, given $x, y \in \mathcal{M}$, there exists a $g \in \mathcal{M}$ such that $y=g \cdot x$.

\begin{definition}[\citep{helgason1962differential}]
    Let $G=I(\mathcal{M})$ act transitively on $\mathcal{M}$ and $H=\operatorname{Stab}(o)$, $o \in \mathcal{M}$ (called the "origin" of $\mathcal{M}$ ) be a subgroup of $G$. Then $\mathcal{M}$ is called a \textbf{homogeneous Riemannian manifold} and can be identified with the quotient space $G / H$ under the diffeomorphic mapping $g H \mapsto g \cdot o, g \in G$.
\end{definition}

By definition, we have $d(x, y) = d(g \cdot x, g \cdot y)$ for any $g \in G$ and any $x, y \in \m$. More importantly, any integrable function $f: \m \to \R$, we have \citep{helgason1962differential}
\[
\int_{\m} f(x) d\nu(x) = \int_{\m} f(g\cdot x) d\nu(x) 
\]
This property leads to Proposition \ref{coro_homo}. 

\begin{definition}[\citep{helgason1962differential}]
    A Riemannian symmetric space is a Riemannian manifold $\m$ such that for any $x \in \m$, there exists $s_x \in G = I(\m)$ such that $s_x \cdot x = x$ and $d s_x|_x=-I$. $S_x$ is called symmetry at $x$.
\end{definition}

That is, a Riemannian symmetric space is a Riemannian manifold $\m$ with the property that the geodesic reflection at any point is an isometry of $\m$. Note that any Riemannian symmetric space is a homogeneous Riemannian manifold, but the converse is not true. 

\begin{definition}[\citep{vinberg1993geometry}]
    A simply-connected homogeneous Riemannian manifold is said to be a \textbf{space of constant curvature} if its isotropy group (at each point) is the group of all orthogonal transformations with respect to some Euclidean metric. 
\end{definition}

Once again, a space of constant curvature is a symmetric space but the converse is not true.


\subsection{Theorem \ref{thm_homo}}
\label{proof_homo}
\begin{proof}

    
    Let $G$ be the isometry group of $\m$. Let $\eta_1, \eta_2 \in \m$ be arbitrary points such that $d(\eta_1, \eta_2) = \Delta$. By Corollary \ref{coro_homo}, the set $A$ reduces to $A = \left\{y \in \m: d(\eta_2, y)^2 - d(\eta_1, y)^2 \ge 2\sigma^2 \varepsilon \right\}$. 

    What we need to show is the following, for any points $\eta_1', \eta_2' \in \m$ such that $d(\eta_1', \eta_2') = \Delta$, 
    \[
    \int_{A} p_{ \eta_1, \sigma }(y)  \; d\nu(y) - e^\varepsilon  \int_{A} p_{ \eta_2, \sigma }(y) \; d\nu(y) = \int_{A'} p_{ \eta_1', \sigma }(y)  \; d\nu(y) - e^\varepsilon  \int_{A'} p_{ \eta_2', \sigma }(y) \; d\nu(y)
    \]
    where $A' = \left\{y \in \m: d(\eta_2', y)^2 - d(\eta_1', y)^2 \ge 2\sigma^2 \varepsilon \right\}$.
    It's sufficient to show
    \begin{align}\label{cond}
        \int_{A} p_{ \eta_1, \sigma }(y) = \int_{A'} p_{ \eta_1', \sigma }(y)  \; d\nu(y), \quad \int_{A} p_{ \eta_2, \sigma }(y) = \int_{A'} p_{ \eta_2', \sigma }(y)  \; d\nu(y).
    \end{align}
    We can separate the proof into three cases: (1) $\eta_1' = \eta_1, \eta_2' \ne \eta_2$; (2) $\eta_1' \ne \eta_1, \eta_2' = \eta_2$; (3) $\eta_1' \ne \eta_1, \eta_2' \ne \eta_2$.
    
    Case (1): $\eta_1' = \eta_1, \eta_2' \ne \eta_2$: 
        
            It follows that $\eta_2$ is in the sphere centered at $\eta$ with radius $\Delta$. (\ref{cond}) then follows from the rotational symmetry of the constant curvature spaces. 
    
    Case (2): $\eta_1' \ne \eta_1, \eta_2' = \eta_2$: 

            Same as case (1).
    
    Case (3): $\eta_1' \ne \eta_1, \eta_2' \ne \eta_2$:

            For any $\eta_1' \ne \eta_1$, there exists $g \in G$, such that $g \cdot \eta_1 = \eta_1'$.  Denote $\eta_2' = g \cdot \eta_2$, we have 
            \begin{align*}
                gA :=& \{g \cdot y : d(\eta_2, y)^2 - d(\eta_1, y)^2 \ge 2\sigma^2 \varepsilon \} \\
                =& \{g \cdot y : d(\eta_2', g \cdot y)^2 - d(\eta_1', g \cdot y)^2 \ge 2\sigma^2 \varepsilon \} \\
                =& \{y : d(\eta_2', y)^2 - d(\eta_1', y)^2 \ge 2\sigma^2 \varepsilon \} \\
                =& A'.
            \end{align*}
            Let $F(y) := p_{\eta_1, \sigma}(y) \mathbf{1}_A(y)$, we have
            \begin{align*}
                & \int_A p_{\eta_1, \sigma}(y) \; d\nu(y) \\
                =& \int_{\m} F \circ L_g^{-1} (y) \; d\nu(y) \\
                =& \int_{\m} p_{\eta_1, \sigma}(g^{-1} \cdot y)\mathbf{1}_{gA}(y)\; d(L_g^{-1})^*\nu(y); \quad \text{change of variable formular,} \\
                =& \int_{\m} p_{\eta_1, \sigma}(g^{-1} \cdot y)\mathbf{1}_{gA}(y)\; d\nu(y); \quad \text{$\nu$ is a $G$-invariant measure, }\\
                =& \frac1{Z_\sigma}  \int_{gA} e^{-\frac1{2\sigma^2}d(g^{-1} \cdot y, \eta_1)^2} \; d\nu(y) \\
                =& \frac1{Z_\sigma}  \int_{gA} e^{-\frac1{2\sigma^2}d((gg^{-1}) \cdot y, g \cdot \eta_1)^2} \; d\nu(y) \\
                =& \frac1{Z_\sigma}  \int_{gA} e^{-\frac1{2\sigma^2}d(y, \eta_1')^2} \; d\nu(y) \\
                =& \int_{gA} p_{\eta_1', \sigma(y)} d\nu(y) \\
                =& \int_{A'} p_{\eta_1', \sigma(y)} d\nu(y).
            \end{align*}
            For $\int_A p_{\eta_2, \sigma}(y) d\nu(y)$, the proof is the same. Combine with the result of case (1), we have finished the proof for case (3). 

\end{proof}    

\subsection{Simulation Details}
\label{app_sim}

For sampling from Riemannian Gaussian distribution $N_{\m}(\theta, \sigma^2)$ on $S^1$ (section \ref{sec_r&s1}), we first sample from truncated normal distribution with $\mu = 0$ and $\sigma^2$, then embed the sample to $\R$ and lastly counter-wise rotate the sample with  degree $\theta$. 

For simulation on $S^2$ (section \ref{sec_sphere}), we choose the pair of $\eta$ and $\eta'$ to be $(1, 0, 0)$ and $(\cos(\Delta), (1 - \cos(\Delta)^2)^{1/2}, 0)$. Though any pair $\eta, \eta' \in S^2$ with $d(\eta, \eta') = \Delta$ works, we simply choose this specific pair for convenience. For Fr\'echet mean computation, we use a gradient descent algorithm described in \citet{reimherr2019}.

\subsubsection{R Codes}
For simulations in section \ref{sec_r&s1}, refer to R files euclid\_functions.R \& euclid\_simulation.R for Euclidean space and sphere\_functions.R \& s1\_simulation.R for unit circle $S^1$. For simulations in section \ref{sec_sphere}, refer to R files sphere\_functions.R \& sphere\_simulation.R.

\end{document}